\newcommand{\aref}[1]{\hyperref[#1]{Appendix~\ref*{#1}}}
\newcommand{\ie}{\textit{i}.\textit{e}.}
\newcommand{\eg}{\textit{e}.\textit{g}.}
\title{SNIPS: Solving Noisy Inverse Problems Stochastically}
\author{%
  Bahjat Kawar, Gregory Vaksman, Michael Elad\\
  Computer Science Department, Technion, Haifa, Israel \\
  \texttt{\{bahjat.kawar, grishav, elad\}@cs.technion.ac.il} \\
}
\begin{document}

\maketitle

\begin{abstract}
  In this work we introduce a novel stochastic algorithm dubbed \emph{SNIPS}, which draws samples from the posterior distribution of any linear inverse problem, where the observation is assumed to be contaminated by additive white Gaussian noise. Our solution incorporates ideas from Langevin dynamics and Newton's method, and exploits a pre-trained minimum mean squared error (MMSE) Gaussian denoiser. The proposed approach relies on an intricate derivation of the posterior score function that includes a singular value decomposition (SVD) of the degradation operator, in order to obtain a tractable iterative algorithm for the desired sampling. Due to its stochasticity, the algorithm can produce multiple high perceptual quality samples for the same noisy observation. We demonstrate the abilities of the proposed paradigm for image deblurring, super-resolution, and compressive sensing. We show that the samples produced are sharp, detailed and consistent with the given measurements, and their diversity exposes the inherent uncertainty in the inverse problem being solved.
\end{abstract}

\section{Introduction}
Many problems in the field of image processing can be cast as noisy linear inverse problems. This family of tasks includes denoising, inpainting, deblurring, super resolution, compressive sensing, and many other image recovery problems.
A general linear inverse problem is posed as
\begin{equation}
\label{eqn:intro}
    \mathbf{y} = \mathbf{Hx} + \mathbf{z},
\end{equation}
where we aim to recover a signal $\mathbf{x}$ from its measurement $\mathbf{y}$, given through a linear degradation operator $\mathbf{H}$ and a contaminating noise, being additive, white and Gaussian, $\mathbf{z} \sim \mathcal{N}\left(0, \sigma_0^2 \mathbf{I}\right)$. In this work we assume that both $\mathbf{H}$ and $\sigma_0$ are known.

Over the years, many strategies, algorithms and underlying statistical models were developed for handling image restoration problems. A key ingredient in many of the classic attempts is the prior that aims to regularize the inversion process and lead to visually pleasing results. Among the various options explored, we mention sparsity-inspired techniques~\cite{elad2006image,yang2010image,dong2012nonlocally}, local Gaussian-mixture modeling~\cite{yu2011solving,zoran2011learning}, and methods relying on non-local self-similarity~\cite{buades2005non, danielyan2011bm3d, ram2013image, vaksman2016patch}. More recently, and with the emergence of deep learning techniques, a direct design of the recovery path from $\mathbf{y}$ to an estimate of $\mathbf{x}$ took the lead, yielding state-of-the-art results in various linear inverse problems, such as denoising~\cite{lefkimmiatis2017non, zhang2017beyond, zhang2018ffdnet, vaksman2020lidia}, deblurring~\cite{kupyn2019deblurgan, suin2020spatially}, super resolution~\cite{dong2016accelerating, haris2018deep,wang2018esrgan} and other tasks~\cite{mccann2017convolutional, lucas2018using, hyun2018deep, gupta2018cnn, ravishankar2019image, zhang2018ista}.

Despite the evident success of the above techniques, many image restoration algorithms still have a critical shortcoming: In cases of severe degradation, most recovery algorithms tend to produce washed out reconstructions that lack details. Indeed, most image restoration techniques 
seek a reconstruction that minimizes the mean squared error between the restored image, $\mathbf{\hat{x}}$, and the unknown original one, $\mathbf{x}$. When the degradation is acute and information is irreversibly lost, image reconstruction becomes a highly ill-posed problem, implying that many possible clean images could explain the given measurements. The MMSE solution averages all these candidate solutions, being the conditional mean of the posterior of $\mathbf{x}$ given $\mathbf{y}$, leading to an image with loss of fine details in the majority of practical cases.
A recent work reported in~\cite{blau2018perception} has shown that reconstruction algorithms necessarily suffer from a perception-distortion tradeoff, \ie, targeting a minimization of the error between $\mathbf{\hat{x}}$ and $\mathbf{x}$ (in any metric) is necessarily accompanied by a compromised perceptual quality.
As a consequence, as long as we stick to the tendency to design recovery algorithms that aim for minimum MSE (or other distances), only a limited perceptual improvement can be expected. 

When perceptual quality becomes our prime objective, the strategy for solving inverse problems must necessarily change. More specifically, the solution should 
concentrate on producing a sample (or many of them) from the posterior distribution $p\left(\mathbf{x} | \mathbf{y}\right)$ instead of its conditional mean. 
Recently, two such approaches have been suggested -- GAN-based and Langevin sampling.
Generative Adversarial Networks (GANs) have shown impressive results in generating realistically looking images (\eg, ~\cite{goodfellow2014gans, radford2016dcgan}). GANs can be utilized for solving inverse problems while producing high-quality images (see \eg~\cite{bahat2020sr, menon2020pulse, peng2020generating}).
These solvers aim to produce a diverse set of output images that are consistent with the measurements, while also being aligned with the distribution of clean examples.
A major disadvantage of GAN-based algorithms for inverse problems is their tendency (as practiced in~\cite{bahat2020sr, menon2020pulse, peng2020generating}) to assume noiseless measurements, a condition seldom met in practice. An exception to this  is the work reported in~\cite{ohayon2021high}, which adapts a conditional GAN to become a stochastic denoiser.

The second approach for sampling from the posterior, and the one we shall be focusing on in this paper, is based on Langevin dynamics.
This core iterative technique enables sampling from a given distribution by leveraging the availability of the score function -- the gradient of the log of the probability density function~\cite{roberts1996exponential, besag2001markov}.
The work reported in~\cite{song2019generative, simoncelli, song2020score} utilizes the annealed Langevin dynamics method, both for image synthesis and for solving \emph{noiseless} inverse problems.\footnote{The work reported in~\cite{ho2020denoising, saharia2021image, li2021srdiff} and~\cite{guo2019agem, laumont2021bayesian} is also relevant to this discussion, but somewhat different. We shall specifically address these papers' content and its relation to our work in \autoref{sec:background}.} Their synthesis algorithm relies on an MMSE Gaussian denoiser (given as a neural network) for approximating a gradually blurred score function.
In their treatment of inverse problems, the conditional score remains tractable and manageable due to the noiseless measurements assumption.

The question addressed in this paper is the following: How can the above line of Langevin-based work be generalized for handling linear inverse problems, as in \autoref{eqn:intro}, in which the measurements are noisy? A partial and limited answer to this question has already been given in~\cite{kawar2021stochastic} for the tasks of image denoising and inpainting. The present work generalizes these (\cite{song2019generative, simoncelli, song2020score, kawar2021stochastic}) results, and introduces a systematic way for sampling from the posterior distribution of any given noisy linear inverse problem.
As we carefully show, this extension is far from being trivial, due to two prime reasons: (i) The involvement of the degradation operator $\mathbf{H}$, which poses a difficulty for establishing a relationship between the reconstructed image and the noisy observation; and (ii) The intricate connection between the measurements' and the synthetic annealed Langevin noise. Our proposed remedy is a decorrelation of the measurements equation via a singular value decomposition (SVD) of the operator $\mathbf{H}$, which decouples the dependencies between the measurements, enabling each to be addressed by an adapted iterative process. In addition, we define the annealing noise to be built as portions of the measurement noise itself, in a manner that facilitates a constructive derivation of the conditional score function.

Following earlier work~\cite{song2019generative, simoncelli, song2020score, kawar2021stochastic}, our algorithm is initialized with a random noise image, gradually converging to the reconstructed result, while following the direction of the log-posterior gradient, estimated using an MMSE denoiser.
Via a careful construction of the gradual annealing noise sequence, from very high values to low ones, the entries in the derived score switch mode. Those referring to non-zero singular values start by being purely dependent on the measurements, and then transition to incorporate prior information based on the denoiser. As for entries referring to zero singular values, their corresponding entries undergo a pure synthesis process based on the prior-only score function. Note that the denoiser blends values in the evolving sample, thus intermixing the influence of the gradient entries.
Our derivations include an analytical expression for a position-dependent step size vector, drawing inspiration from Newton’s method in optimization. This stabilizes the algorithm and is shown to be essential for its success.

We refer hereafter to our algorithm as \emph{SNIPS} (Solution of Noisy Inverse Problems Stochastically).
Observe that as we target to sample from the posterior distribution $p\left(\mathbf{x} | \mathbf{y}\right)$, different runs of SNIPS on the same input necessarily yield different results, all of which valid solutions to the given inverse problem. This should
not come as a surprise, as ill-posedness implies that there are multiple viable solutions for the same data, as has already been suggested in the context of super resolution~\cite{menon2020pulse, bahat2020sr, peng2020generating}. We demonstrate SNIPS on image deblurring, single image super resolution, and compressive sensing, all of which contain non-negligible noise, and emphasize the high perceptual quality of the results, their diversity, and their relation to the MMSE estimate.

To summarize, this paper's contributions are threefold:
\begin{itemize}
    \item We present an intricate derivation of the blurred posterior score function for general noisy inverse problems, where both the measurement and the target image contain delicately inter-connected additive white Gaussian noise.
    \item We introduce a novel stochastic algorithm -- \emph{SNIPS} -- that can sample from the posterior distribution of these problems. The algorithm relies on the availability of an MMSE denoiser.
    \item We demonstrate impressive results of \emph{SNIPS} on image deblurring, single image super resolution, and compressive sensing, all of which are highly noisy and ill-posed.
\end{itemize}

Before diving into the details of this work, we should mention that using Gaussian denoisers iteratively for handling general linear inverse problems has been already proposed in the context of the Plug-and-Play-Prior (PnP) method~\cite{venkatakrishnan2013plug} and RED~\cite{romano2017red}, and their many followup papers (\eg,~\cite{zhang2017learning, meinhardt2017learning, arridge2019solving, sun2019online, buzzard2018plug, tirer2018image, rond2016poisson, bigdeli2017deep}). However, both PnP and RED are quite different from our work, as they do not target sampling from the posterior, but rather focus on MAP or MMSE estimation.

\newcounter{row}
\newcounter{col}
\begin{figure}
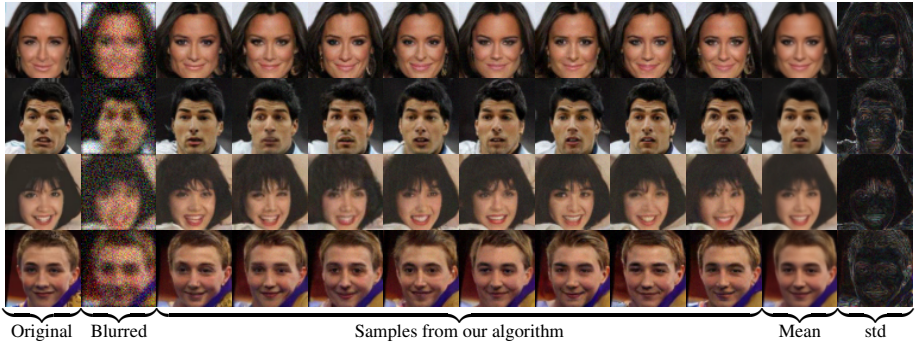

    \centering
    \def\arraystretch{0.1}
    \setlength\tabcolsep{0pt}
    \begin{tabular}{cccccccccccc}
         \forloop{row}{4}{\value{row} < 8}{
            \hspace{-0.05cm}\includegraphics[width=1cm]{./images/unideblur/sample_\arabic{row}_column_0.png} \hspace{-0.17cm}
            \forloop{col}{1}{\value{col} < 12}{
                & \includegraphics[width=1cm,height=1cm]{./images/unideblur/sample_\arabic{row}_column_\arabic{col}.png} \hspace{-0.17cm}
            } \\
        }
        & \multicolumn{11}{c}{\vspace{0.5mm}}\\
        \upbracefill & \upbracefill &
        \multicolumn{8}{c}{
            \upbracefill
        } &
        \upbracefill & \upbracefill
        \\
        & \multicolumn{11}{c}{\vspace{0.5mm}}\\
        \scriptsize{Original} & \scriptsize{Blurred} &
        \multicolumn{8}{c}{
            \scriptsize{Samples from our algorithm}
        } &
        \scriptsize{Mean} & \scriptsize{std}
    \end{tabular}
    \caption{Deblurring results on CelebA~\cite{liu2015celeba} images (uniform $5 \times 5$ blur and an additive noise with $\sigma_0=0.1$). Here and in all other shown figures, the standard deviation image is scaled by 4 for better visual inspection.}
    \label{fig:deblur_celeba}
\end{figure}

\section{Background}
\label{sec:background}
The Langevin dynamics algorithm~\cite{besag2001markov,roberts1996exponential} suggests sampling from a probability distribution $p\left(\mathbf{x}\right)$ using the iterative transition rule
\begin{equation}
\label{eqn:langevin_dynamics}
    \mathbf{x}_{t+1} = \mathbf{x}_t + \alpha \nabla_{\mathbf{x}_t} \log p\left(\mathbf{x}_t\right) + \sqrt{2\alpha} \mathbf{z}_t \;,
\end{equation}
where $\mathbf{z}_t \sim \mathcal{N}\left(0, \mathbf{I}\right)$ and $\alpha$ is an appropriately chosen small constant. The added $\mathbf{z}_t$ allows for stochastic sampling, avoiding a collapse to a maximum of the distribution. Initialized randomly, after a sufficiently large number of iterations, and under some mild conditions, this process converges to a sample from the desired distribution $p\left(\mathbf{x}\right)$~\cite{roberts1996exponential}.

The work reported in~\cite{song2019generative} extends the aforementioned algorithm into \textit{annealed Langevin dynamics}.
The annealing proposed replaces the score function in \autoref{eqn:langevin_dynamics} with a blurred version of it, $\nabla_{\mathbf{\tilde{x}}_t} \log p\left(\mathbf{\tilde{x}}_t\right)$, where ${\mathbf{\tilde{x}_t} = \mathbf{x}_t + \mathbf{n}}$ and $\mathbf{n} \sim \mathcal{N}\left(0, \sigma^{2} \mathbf{I}\right)$ is a synthetically injected noise. The core idea is to start with a very high noise level $\sigma$ and gradually drop it to near-zero, all while using a step size $\alpha$ dependent on the noise level. These changes allow the algorithm to converge much faster and perform better, because it widens the basin of attraction of the sampling process.
The work in~\cite{simoncelli} further develops this line of work by leveraging a brilliant relation attributed to Miyasawa~\cite{Miyasawa61} (also known as Stein's integration by parts trick~\cite{stein1981estimation} or Tweedie's identity~\cite{efron2011tweedie}). It is given as
\begin{equation}
\label{eqn:denoiser}
    \nabla_{\mathbf{\tilde{x}}_t} \log p\left(\mathbf{\tilde{x}}_t\right) = \frac{\mathbf{D}\left(\mathbf{\tilde{x}}_t, \sigma\right) - \mathbf{\tilde{x}}_t}{\sigma^2},
\end{equation}
where $\mathbf{D}\left(\mathbf{\tilde{x}}_t, \sigma\right) = \mathbb{E}\left[\mathbf{x} | \mathbf{\tilde{x}}_t\right]$ is the minimizer of the MSE measure $\mathbb{E}\left[\| \mathbf{x} - \mathbf{D}\left(\mathbf{\tilde{x}}_t, \sigma\right) \|_2^2\right]$, which can be approximated using a denoising neural network. This facilitates the use of denoisers in Langevin dynamics as a replacement for the evasive score function.

When turning to solve inverse problems, previous work suggests sampling from the posterior distribution $p\left(\mathbf{x} | \mathbf{y}\right)$ using annealed Langevin dynamics~\cite{simoncelli, song2020score, kawar2021stochastic} or similar methods~\cite{guo2019agem, ho2020denoising, saharia2021image, li2021srdiff}, by replacing the score function used in the generation algorithm with a conditional one.
As it turns out, if limiting assumptions can be posed on the measurements formation, the conditional score is tractable, and thus generalization of the annealed Langevin process to these problems is within reach. Indeed, in~\cite{song2019generative, simoncelli, song2020score, saharia2021image, li2021srdiff} the core assumption is $\mathbf{y}=\mathbf{Hx}$ for specific and simplified choices of $\mathbf{H}$ and with no noise in the measurements.
The works in~\cite{guo2019agem, laumont2021bayesian} avoid these difficulties altogether by returning to the original (non-annealed) Langevin method, with the unavoidable cost of becoming extremely slow. In addition, their algorithms are demonstrated on inverse problems in which the additive noise is restricted to be very weak.
The work in~\cite{kawar2021stochastic} is broader, allowing for an arbitrary additive white Gaussian noise, but limits $\mathbf{H}$ to the problems of denoising or inpainting.
While all these works demonstrate high quality results, there is currently no clear way for deriving the blurred score function of a general linear inverse problem as posed in \autoref{eqn:intro}. In the following, we present such a derivation.
\section{The Proposed Approach: Deriving the Conditional Score Function}
\subsection{Problem Setting}
\label{sec:problem}
We consider the problem of recovering a signal $\mathbf{x} \in \mathbb{R}^N$ (where $\mathbf{x} \sim p\left(\mathbf{x}\right)$ and $p\left(\mathbf{x}\right)$ is unknown) from the observation $\mathbf{y} = \mathbf{Hx} + \mathbf{z}$, where $\mathbf{y} \in \mathbb{R}^M, \mathbf{H} \in \mathbb{R}^{M \times N}, M \leq N,  \mathbf{z}\sim\mathcal{N}\left(0,\sigma_{0}^{2} \mathbf{I}\right)$, and $\mathbf{H}$ and $\sigma_0$ are known.\footnote{We assume $M \leq N$ for ease of notations, and because this is the common case. However, the proposed approach and all our derivations work just as well for $M > N$.}
Our ultimate goal is to sample from the posterior $p\left(\mathbf{x}|\mathbf{y}\right)$. However, since access to the score function $\nabla_\mathbf{x} \log p(\mathbf{x}|\mathbf{y})$ is not available, we retarget our goal, as explained above, to sampling from blurred posterior distributions, $p\left(\mathbf{\tilde{x}}|\mathbf{y}\right)$, where ${\mathbf{\tilde{x}} = \mathbf{x} + \mathbf{n}}$ and $\mathbf{n} \sim \mathcal{N}\left(0, \sigma^{2}\mathbf{I}\right)$, with noise levels $\sigma$ starting very high, and decreasing towards near-zero. 

As explained in \aref{sec:plus_appendix}, the sampling should be performed in the SVD domain in order to get a tractable derivation of the blurred score function. Thus,
we consider the singular value decomposition (SVD) of $\mathbf{H}$, given as $\mathbf{H}=\mathbf{U} \mathbf{\Sigma} \mathbf{V}^T$, where $\mathbf{U} \in \mathbb{R}^{M \times M}$ and $\mathbf{V} \in \mathbb{R}^{N \times N}$ are orthogonal matrices, and $\mathbf{\Sigma} \in \mathbb{R}^{M \times N}$ is a rectangular diagonal matrix containing the singular values of $\mathbf{H}$, denoted as $\left\{s_j\right\}_{j=1}^{M}$ in descending order (${s_1 > s_2 > \dots > s_{M-1} > s_M \ge 0}$). For convenience of notations, we also define $s_{j} = 0$ for $j = M+1, \dots, N$.
To that end, we notice that
\begin{equation}
    p\left(\mathbf{\tilde{x}} | \mathbf{y}\right) = p\left(\mathbf{\tilde{x}} | \mathbf{U}^T \mathbf{y}\right) = p\left(\mathbf{V}^T \mathbf{\tilde{x}} | \mathbf{U}^T \mathbf{y}\right).
\end{equation}
The first equality holds because the multiplication of $\mathbf{y}$ by the orthogonal matrix $\mathbf{U}^T$ does not add or remove information, and the second equality holds because the multiplication of $\mathbf{\tilde x}$ by $\mathbf{V}^T$ does not change its probability~\cite{prob}. Therefore, sampling from $p\left(\mathbf{V}^{T} \mathbf{\tilde{x}} | \mathbf{U}^T \mathbf{y}\right)$ and then multiplying the result by $\mathbf{V}$ will produce the desired sample from $p\left(\mathbf{\tilde{x}} | \mathbf{y}\right)$. As we are using Langevin dynamics, we need to calculate the conditional score function $\nabla_{\mathbf{V}^{T} \mathbf{\tilde{x}}} \log p\left(\mathbf{V}^{T} \mathbf{\tilde{x}} | \mathbf{U}^T \mathbf{y}\right)$.
For simplicity, we denote hereafter $\mathbf{y}_T = \mathbf{U}^T \mathbf{y}$, $\mathbf{z}_T = \mathbf{U}^T \mathbf{z}$, $\mathbf{x}_T = \mathbf{V}^T \mathbf{x}$, $\mathbf{n}_T = \mathbf{\Sigma V}^T \mathbf{n}$, and $\mathbf{\tilde{x}}_T = \mathbf{V}^T \mathbf{\tilde{x}}$.
Observe that with these notations, the measurements equation becomes
\[
\mathbf{y} = \mathbf{Hx}+\mathbf{z} = \mathbf{U \Sigma} \mathbf{V}^T \mathbf{x}+\mathbf{z},
\]
and thus
\[
    \mathbf{U}^T \mathbf{y} = \mathbf{\Sigma} \mathbf{V}^T \mathbf{x} +\mathbf{U}^T\mathbf{z}
          = \mathbf{\Sigma} \mathbf{V}^T (\mathbf{\tilde{x}} - \mathbf{n}) +\mathbf{U}^T \mathbf{z}
          = \mathbf{\Sigma} \mathbf{V}^T \mathbf{\tilde x} - \mathbf{\Sigma} \mathbf{V}^T \mathbf{n} + \mathbf{U}^T \mathbf{z}, 
\]
where we have relied on the relation $\mathbf{\tilde x} = \mathbf{x} + \mathbf{n}$. This leads to 
\begin{equation}
\label{eqn:measurements}
    \mathbf{y}_T = \mathbf{\Sigma} \mathbf{\tilde x}_T - \mathbf{n}_T + \mathbf{z}_T.
\end{equation}

In this formulation, which will aid in deriving the conditional score, our aim is to make design choices on $\mathbf{n}_T$ such that $\mathbf{z}_T-\mathbf{n}_T$ has uncorrelated entries and is independent of $\mathbf{\tilde x}_T$. This brings us to the formation of the synthetic annealed noise, which is an intricate ingredient in our derivations. 

We base this formation on the definition of a sequence of noise levels $\left\{\sigma_i\right\}_{i=1}^{L+1}$ such that ${\sigma_{1} > \sigma_{2} > \dots > \sigma_{L} > \sigma_{L+1}=0}$, where $\sigma_{1}$ is high (possibly $\sigma_1 > \left\lVert\mathbf{x}\right\rVert_\infty$) and $\sigma_{L}$ is close to zero.
We require that for every $j$ such that $s_j \neq 0$, there exists $i_j$ such that ${\sigma_{i_j} s_j < \sigma_0}$ and ${\sigma_{i_j - 1} s_j > \sigma_0}$. This implies $\forall i: \sigma_i s_j \neq \sigma_0$, which helps ease notations. SNIPS works just as well for $\sigma_i s_j = \sigma_0$.

Using $\left\{\sigma_i\right\}_{i=1}^{L+1}$, we would like to define $\left\{\mathbf{\tilde{x}}_i\right\}_{i=1}^{L+1}$, a sequence of noisy versions of $\mathbf{x}$, where the noise level in $\mathbf{\tilde{x}}_i$ is $\sigma_i$.
One might be tempted to define these noise additions as independent of the measurement noise $\mathbf{z}$. However, this option leads to a conditional score term that cannot be calculated analytically, as explained in~\aref{sec:plus_appendix}.
Therefore, we define these noise additions differently, as carved from $\mathbf{z}$ in a gradual fashion.
To that end, we define $\mathbf{\tilde{x}}_{L+1} = \mathbf{x}$, and for every ${i = L, L-1, \dots, 1}$: ${\mathbf{\tilde{x}}_{i} = \mathbf{\tilde{x}}_{i+1} + \text{\boldmath$\eta$}_i}$, where ${\text{\boldmath$\eta$}_i \sim \mathcal{N}\left(0, \left( \sigma_i^2 - \sigma_{i+1}^2 \right) \mathbf{I}\right)}$.
This results in $\mathbf{\tilde{x}}_{i} = \mathbf{x} + \mathbf{n}_i$, where ${\mathbf{n}_i = \sum_{k=i}^{L} \text{\boldmath$\eta$}_k \sim \mathcal{N}\left(0, \sigma_i^2 \mathbf{I}\right)}$.

And now we turn to define the statistical dependencies between the measurements' noise $\mathbf{z}$ and the artificial noise vectors $\text{\boldmath$\eta$}_i$. Since $\text{\boldmath$\eta$}_i$ and $\mathbf{z}$ are each Gaussian with uncorrelated entries, so are the components of the vectors $\mathbf{\Sigma} \mathbf{V}^T \text{\boldmath$\eta$}_i$, $\mathbf{\Sigma} \mathbf{V}^T \mathbf{n}_i$, and $\mathbf{z}_T$. In order to proceed while easing notations, let us focus on a single entry $j$ in these three vectors, for which $s_j>0$, and omit this index. We denote these entries as ${\eta}_{T,i}$, $n_{T, i}$ and ${z}_T$, respectively. We construct $\eta_{T,i}$ such that   
\[
\mathbb{E}\left[\eta_{T,i} \cdot z_T\right] = \begin{cases}
\mathbb{E}\left[\eta_{T,i}^2\right] & \text{for } i \geq i_j \\
\mathbb{E}\left[\left(z_T - n_{T,i_j}\right)^2\right] & \text{for } i = i_j - 1 \\
0 & \text{otherwise.}
\end{cases}
\]
This implies that the layers of noise $\eta_{T,L+1}, \dots \eta_{T,i_j}$ are all portions of $z_T$ itself, with an additional portion being contained in $\eta_{T, i_j - 1}$. Afterwards, $\eta_{T, i}$ become independent of $z_T$.
In the case of $s_j=0$, the above relations simplify to be ${E[\eta_{T,i} \cdot z_T] = 0}$ for all $i$, implying no statistical dependency between the given and the synthetic noises.
Consequently, it can be shown that the overall noise in \autoref{eqn:measurements} satisfies
\begin{equation}
\label{eqn:relation_to_z}
    \left( \mathbf{\Sigma} \mathbf{V}^T \mathbf{n}_i - \mathbf{z_T} \right)_j = n_{T,i} - z_T \sim
    \begin{cases}
        \mathcal{N}\left(0, s_j^2 \sigma_{i}^2 - \sigma_0^2\right) & \text{if $\sigma_i s_j > \sigma_0$} \\
        \mathcal{N}\left(0, \sigma_0^2 - s_j^2 \sigma_{i}^2\right) & \text{otherwise.} \\
    \end{cases}
\end{equation}
The top option refers to high values of the annealed Langevin noise, in which, despite the possible decay caused by the singular value $s_j$, this noise is stronger than $z_T$. In this case, $n_{T,i}$ contains all $z_T$ and an additional independent portion of noise. The bottom part assumes that the annealed noise (with the influence of $s_j$) is weaker than the measurements' noise, and then it is fully immersed within $z_T$, with the difference being Gaussian and independent. 

\subsection{Derivation of the Conditional Score Function}
\label{sec:grad_calc}
The above derivations show that the noise in \autoref{eqn:measurements} is zero-mean, Gaussian with uncorrelated entries and of known variance, and this noise is independent of $\mathbf{\tilde{x}}_i$. Thus \autoref{eqn:measurements} can be used conveniently for deriving the measurements part of the conditional score function. 
We denote $\mathbf{\tilde{x}}_T = \mathbf{V}^T \mathbf{\tilde{x}}_i$, $\mathbf{\tilde{x}} = \mathbf{\tilde{x}}_i$,  $\mathbf{n} = \mathbf{n}_i$ for simplicity, and turn to calculate $\nabla_{ \mathbf{\tilde{x}}_T} \log p\left(\mathbf{\tilde{x}}_T | \mathbf{y}_T\right)$.
We split $\mathbf{\tilde{x}}_T$ into three parts:
(i) $\mathbf{\tilde{x}}_{T,0}$ refers to the entries $j$ for which $s_j = 0$;
(ii) $\mathbf{\tilde{x}}_{T,<}$ corresponds to the entries $j$ for which $0 < \sigma_{i} s_j < \sigma_0$;
and (iii) $\mathbf{\tilde{x}}_{T,>}$ includes the entries $j$ for which $ \sigma_{i} s_j > \sigma_0$.
Observe that this partition of the entries of $\mathbf{\tilde{x}}_T$ is non-overlapping and fully covering.
Similarly, we partition every vector $\mathbf{v} \in \mathbb{R}^N$ into $\mathbf{v}_0, \mathbf{v}_<, \mathbf{v}_>$, which are the entries of $\mathbf{v}$ corresponding to $\mathbf{\tilde{x}}_{T,0}, \mathbf{\tilde{x}}_{T,<}, \mathbf{\tilde{x}}_{T,>}$, respectively.
Furthermore, we define $\mathbf{v}_{\not{0}}, \mathbf{v}_{\not{<}}, \mathbf{v}_{\not{>}}$ as all the entries of $\mathbf{v}$ except $\mathbf{v}_0, \mathbf{v}_<, \mathbf{v}_>$, respectively.
With these definitions in place, the complete derivation of the score function is detailed in \aref{sec:proofs}, and here we bring the final outcome. For $\mathbf{\tilde{x}}_{T,0}$, the score is independent of the measurements and given by
\begin{equation}
\label{eqn:grad_0}
    \nabla_{ \mathbf{\tilde{x}}_{T,0}} \log p\left(\mathbf{\tilde{x}}_T | \mathbf{y}_T\right) = \left(\mathbf{V}^T \nabla_{\mathbf{\tilde{x}}} \log p\left(\mathbf{\tilde{x}}\right) \right)_{0}.
\end{equation}

For the case of $\mathbf{\tilde{x}}_{T,>}$, the expression obtained is only measurements-dependent,
\begin{equation}
\label{eqn:grad_b}
     \nabla_{ \mathbf{\tilde{x}}_{T,>} } \log p\left(\mathbf{\tilde{x}}_T | \mathbf{y}_T\right) =
    \left( \mathbf{\Sigma}^T \left( \sigma_i^2 \mathbf{\Sigma \Sigma}^T - \sigma_0^2 \mathbf{I} \right)^{\dagger} \left( \mathbf{y}_T - \mathbf{\Sigma} \mathbf{\tilde{x}}_T \right) \right)_>.
\end{equation}

Lastly, for the case of $\mathbf{\tilde{x}}_{T,<}$, the conditional score includes two terms -- one referring to the plain (blurred) score, and the other depending on the measurements, 
\begin{equation}
\label{eqn:grad_s}
    \nabla_{ \mathbf{\tilde{x}}_{T,<}} \log p\left(\mathbf{\tilde{x}}_T | \mathbf{y}_T\right) =
    \left( \mathbf{\Sigma}^T \left( \sigma_0^2 \mathbf{I} - \sigma_i^2 \mathbf{\Sigma \Sigma}^T \right)^{\dagger} \left( \mathbf{y}_T - \mathbf{\Sigma}  \mathbf{\tilde{x}}_T \right) \right)_< + \left(\mathbf{V}^T \nabla_{\mathbf{\tilde{x}}} \log p\left(\mathbf{\tilde{x}}\right) \right)_<.
\end{equation}

As already mentioned, the full derivations of equations \ref{eqn:grad_0}, \ref{eqn:grad_b}, and \ref{eqn:grad_s} are detailed in \autoref{sec:proofs}. Aggregating all these results together, we obtain the following conditional score function:
\begin{equation}
\label{eqn:gradient}
    \nabla_{ \mathbf{\tilde{x}}_T } \log p\left(\mathbf{\tilde{x}}_T | \mathbf{y}_T\right) =
    \mathbf{\Sigma}^T \left| \sigma_0^2 \mathbf{I} - \sigma_i^2 \mathbf{\Sigma \Sigma}^T \right|^{\dagger}
    \left( \mathbf{y}_T - \mathbf{\Sigma}  \mathbf{\tilde{x}}_T \right) +
    \left. \left( \mathbf{V}^T \nabla_{\mathbf{\tilde{x}}} \log p\left(\mathbf{\tilde{x}}\right) \right) \right|_{\not{>}},
\end{equation}
where $\left. \left( \mathbf{v} \right) \right|_{\not{>}}$ is the vector $\mathbf{v}$, but with zeros in its entries that correspond to $\mathbf{v}_{>}$. Observe that the first term in \autoref{eqn:gradient} contains zeros in the entries corresponding to $\mathbf{\tilde{x}}_{T, 0}$, matching the above calculations.
The vector $\nabla_{\mathbf{\tilde{x}}} \log p\left(\mathbf{\tilde{x}}\right)$ can be estimated using a neural network as in~\cite{song2019generative}, or using a pre-trained MMSE denoiser as  in~\cite{simoncelli, kawar2021stochastic}.
All the other elements of this vector are given or can be easily obtained from $\mathbf{H}$ by calculating its SVD decomposition once at the beginning.

\section{The Proposed Algorithm}
\begin{figure}
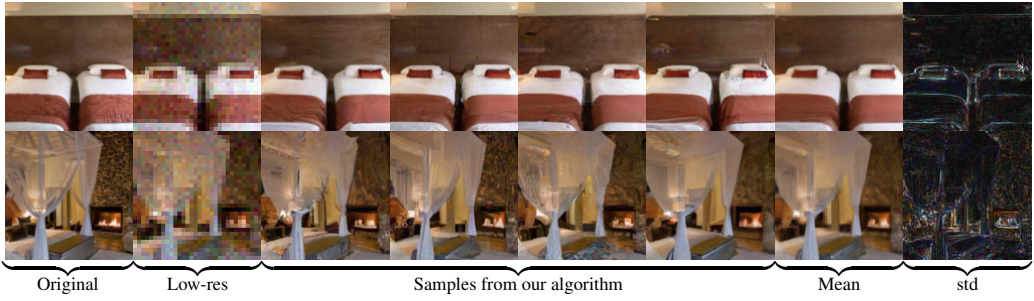

    \centering
    \def\arraystretch{0.1}
    \setlength\tabcolsep{0pt}
    \begin{tabular}{cccccccc}
        \forloop{row}{0}{\value{row} < 2}{
            \hspace{-0.05cm}\includegraphics[width=1.7cm,height=1.7cm]{./images/bedroom_sr4/sample_\arabic{row}_column_0.png} \hspace{-0.17cm}
            \forloop{col}{1}{\value{col} < 8}{
                & \includegraphics[width=1.7cm,height=1.7cm]{./images/bedroom_sr4/sample_\arabic{row}_column_\arabic{col}.png} \hspace{-0.17cm}
            } \\
        }
        & \multicolumn{7}{c}{\vspace{0.5mm}}\\
        \upbracefill & \upbracefill &
        \multicolumn{4}{c}{
            \upbracefill
        } &
        \upbracefill & \upbracefill
        \\
        & \multicolumn{7}{c}{\vspace{0.5mm}}\\
        \scriptsize{Original} & \scriptsize{Low-res} &
        \multicolumn{4}{c}{
            \scriptsize{Samples from our algorithm}
        } &
        \scriptsize{Mean} & \scriptsize{std}
    \end{tabular}
    \caption{Super resolution results on LSUN bedroom~\cite{yu2015lsun} images (downscaling $4:1$ by plain averaging and adding noise with $\sigma_0 = 0.04$).}
    \label{fig:sr4_bedroom}
\end{figure}
Armed with the conditional score function in \autoref{eqn:gradient}, the Langevin dynamics algorithm can be run with a constant step size or an annealed step size as in~\cite{song2019generative}, and this should converge to a sample from $p\left(\mathbf{\tilde{x}}_T | \mathbf{y}_T\right)$. However, for this to perform well, one should use a very small step size, implying a devastatingly slow convergence behavior.
This is mainly due to the fact that different entries of $\mathbf{\tilde{x}}_T$ advance at different speeds, in accord with their corresponding singular values. As the added noise in each step has the same variance in every entry, this leads to an unbalanced signal-to-noise ratio, which considerably slows down the algorithm.

In order to mitigate this problem, we suggest using a \textit{step size vector} $\text{\boldmath$\alpha$}_i \in \mathbb{R}^N$. We denote ${\mathbf{A}_i = {diag}\left(\text{\boldmath$\alpha$}_i\right)}$, and obtain the following update formula for a Langevin dynamics algorithm:
\begin{equation}
\label{eqn:update_formula}
    \mathbf{V}^T \mathbf{\tilde x}_i = \mathbf{V}^T \mathbf{\tilde x}_{i-1} + c \cdot \mathbf{A}_i \cdot \nabla_{\mathbf{V}^T \mathbf{\tilde x}_i} \log p\left(\mathbf{V}^T \mathbf{\tilde x}_i | \mathbf{y}_T\right) + \sqrt{2\cdot c}  \mathbf{A}_i^\frac{1}{2} \cdot \mathbf{z}_i,
\end{equation}
where the conditional score function is estimated as described in \autoref{sec:grad_calc}, and $c$ is some constant.
For the choice of the step sizes in the diagonal of $\mathbf{A}_i$, we draw inspiration from Newton's method in optimization, which is designed to speed up convergence to local maximum points. The update formula in Newton's method is the same as \autoref{eqn:update_formula}, but without the additional noise $\mathbf{z}_i$, and with $\mathbf{A}_i$ being the negative inverse Hessian of $\log p\left(\mathbf{V}^T \mathbf{\tilde x}_i | \mathbf{y}_T\right)$.
We calculate a diagonal approximation of the Hessian, and set $\mathbf{A}_i$ to be its negative inverse. We also estimate the conditional score function using \autoref{eqn:gradient} and a neural network.
Note that this mixture of Langevin dynamics and Newton's method has been suggested in a slightly different context in~\cite{simsekli2016stochastic}, where the Hessian was approximated using a Quasi-Newton method. In our case, we analytically calculate a diagonal approximation of the negative inverse Hessian and obtain the following:
\begin{equation}
    \left(\text{\boldmath$\alpha$}_i\right)_j = \begin{cases}
    \sigma_i^2, & s_j = 0 \\
    \sigma_i^2 - \frac{\sigma_0^2}{s_j^2}, & \sigma_{i} s_j > \sigma_0 \\
    \sigma_i^2 \cdot \left( 1 - s_j^2 \frac{\sigma_i^2}{\sigma_0^2} \right), & 0 < \sigma_{i} s_j < \sigma_0.
    \end{cases}
\end{equation}
The full derivations for each of the three cases are detailed in \aref{sec:proof_step}. Using these step sizes, the update formula in \autoref{eqn:update_formula}, the conditional score function in \autoref{eqn:gradient}, and a neural network $\mathbf{s}\left(\mathbf{\tilde{x}}, \sigma\right)$ that estimates the score function $\nabla_{\mathbf{\tilde{x}}} \log p\left( \mathbf{\tilde{x}} \right)$,\footnote{Recall that $\mathbf{s}\left(\mathbf{\tilde{x}}, \sigma\right)=\left(\mathbf{D}\left(\mathbf{\tilde{x}}, \sigma\right)-\mathbf{\tilde{x}}\right)/\sigma^2$, being a denoising residual.} we obtain a tractable iterative algorithm for sampling from $p\left( \mathbf{\tilde{x}}_L \mid \mathbf{y} \right)$, where the noise in $\mathbf{\tilde{x}}_L$ is sufficiently negligible to be considered as a sampling from the ideal image manifold.

\begin{algorithm}[H]
\label{alg:general}
\caption{SNIPS}
 \KwIn{$\left\{\sigma_{i}\right\}_{i=1}^{L}$, $c$, $\tau$, $\mathbf{y}$, $\mathbf{H}$, $\sigma_0$}
 $\mathbf{U}, \mathbf{\Sigma}, \mathbf{V} \leftarrow svd\left(\mathbf{H}\right)$ \\
 Initialize $\mathbf{x_0}$ with random noise $U\left[0, 1\right]$\\
 \For{$i$ $\leftarrow$ $1$ to $L$}{
   $\left( \mathbf{A}_{i} \right)_0 \leftarrow \sigma_{i}^{2} \mathbf{I}$ \\
   $\left( \mathbf{A}_{i} \right)_< \leftarrow \sigma_i^2 \cdot \left( \mathbf{I} - \frac{\sigma_i^2}{\sigma_0^2} \mathbf{\Sigma_< \Sigma_<}^T \right)$ \\
   $\left( \mathbf{A}_{i} \right)_> \leftarrow \sigma_i^2 \mathbf{I} - \sigma_0^2 \mathbf{\Sigma}_>^{\dagger} \mathbf{\Sigma}_>^{\dagger^T}$ \\
   \For{$t$ $\leftarrow$ $1$ to $\tau$}{
     Draw $\mathbf{z}_{t} \sim \mathcal{N}\left(0, \mathbf{I}\right)$ \\
     $\mathbf{d}_t \leftarrow 
     \mathbf{\Sigma}^T \cdot \left| \sigma_0^2 \mathbf{I} - \sigma_i^2 \mathbf{\Sigma \Sigma}^T \right|^{\dagger} \cdot
     \left( \mathbf{U}^T \mathbf{y} - \mathbf{\Sigma V}^T \mathbf{x}_{t-1} \right) +
     \left. \left( \mathbf{V}^T \cdot \mathbf{s}\left(\mathbf{x}_{t-1}, \sigma_{i}\right) \right) \right|_{\not{>}}$ \\
     $\mathbf{x}_t \leftarrow \mathbf{V} \cdot \left( \mathbf{V}^T \mathbf{x}_{t-1} + c \mathbf{A}_{i} \mathbf{d}_{t} + \sqrt{2c} \mathbf{A}_i^\frac{1}{2} \mathbf{z}_t \right)$
   }
 $\mathbf{x}_0 \leftarrow \mathbf{x}_\tau$
 }
 \KwOut{$\mathbf{x_0}$}
\end{algorithm}
Note that when we set $\mathbf{H} = 0$ and $\sigma_0=0$, implying no measurements, the above algorithm degenerates to an image synthesis, exactly as in~\cite{song2019generative}. Two other special cases of this algorithm are obtained for $\mathbf{H}=\mathbf{I}$ or $\mathbf{H}=\mathbf{I}$ with some rows removed, the first referring to denoising and the second to noisy inpainting, both cases shown in~\cite{kawar2021stochastic}. Lastly, for the choices of $\mathbf{H}$ as in~\cite{simoncelli} or~\cite{song2019generative, song2020score} and with $\sigma_0 =0$, the above algorithm collapses to a close variant of their proposed iterative methods.
\section{Experimental Results}
\label{sec:experiments}
\begin{figure}
    \centering
    \newcommand{\sizeee}{1.2cm}
    \def\arraystretch{0.2}
    \setlength\tabcolsep{0.2pt}
    \begin{tabular}{cccccc}
        &
        \includegraphics[width=\sizeee,height=\sizeee]{./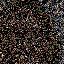} &
        \includegraphics[width=\sizeee,height=\sizeee]{./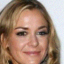} &
        \includegraphics[width=\sizeee,height=\sizeee]{./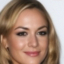} &
        \includegraphics[width=\sizeee,height=\sizeee]{./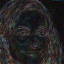} &
        \hspace{0.1cm}\raisebox{0.7cm}[0pt][0pt]{\rotatebox[origin=c]{90}{\scriptsize{$25\%$}}} \\
        
        \includegraphics[width=\sizeee,height=\sizeee]{./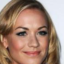} &
        \includegraphics[width=\sizeee,height=\sizeee]{./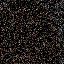} &
        \includegraphics[width=\sizeee,height=\sizeee]{./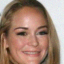} &
        \includegraphics[width=\sizeee,height=\sizeee]{./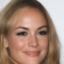} &
        \includegraphics[width=\sizeee,height=\sizeee]{./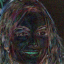} &
        \hspace{0.1cm}\raisebox{0.7cm}[0pt][0pt]{\rotatebox[origin=c]{90}{\scriptsize{$12.5\%$}}} \\

        &
        \includegraphics[width=\sizeee,height=\sizeee]{./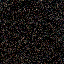} &
        \includegraphics[width=\sizeee,height=\sizeee]{./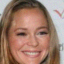} &
        \includegraphics[width=\sizeee,height=\sizeee]{./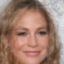} &
        \includegraphics[width=\sizeee,height=\sizeee]{./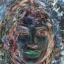} &
        \hspace{0.1cm}\raisebox{0.7cm}[0pt][0pt]{\rotatebox[origin=c]{90}{\scriptsize{$6.25\%$}}} \\
        
        \scriptsize{Original} & \scriptsize{Degraded} & \scriptsize{Sample} & \scriptsize{Mean} & \scriptsize{std} &
    \end{tabular}
    \caption{Compressive sensing results on a CelebA~\cite{liu2015celeba} image with an additive noise of $\sigma_0 = 0.1$.}
    \label{fig:cs_celeba}
\end{figure}
In our experiments we use the NCSNv2~\cite{song2020improved} network in order to estimate the score function of the prior distribution. Three different NCSNv2 models are used, each trained separately on training sets of: (i) images of size $64 \times 64$ pixels from the CelebA dataset~\cite{liu2015celeba}; (ii) images of size $128 \times 128$ pixels from LSUN~\cite{yu2015lsun} bedrooms dataset; and (iii) LSUN $128 \times 128$ images of towers.
We demonstrate SNIPS' capabilities on the respective test sets for image deblurring, super resolution, and compressive sensing. In each of the experiments, we run our algorithm 8 times, producing 8 samples for each input. We examine both the samples themselves and their mean, which serves as an approximation of the MMSE solution, $\mathbb{E}\left[\mathbf{x} | \mathbf{y}\right]$.

\textbf{For image deblurring}, we use a uniform $5 \times 5$ blur kernel, 
and an additive white Gaussian noise with $\sigma_0=0.1$ (referring to pixel values in the range $[0,1]$). \autoref{fig:deblur_celeba} demonstrates the obtained results for several images taken from the CelebA dataset. As can be seen, SNIPS produces visually pleasing, diverse samples.

\textbf{For super resolution}, the images are downscaled using a block averaging filter, \ie, each non-overlapping block of pixels in the original image is averaged into one pixel in the low-resolution image. We use blocks of size $2 \times 2$ or $4 \times 4$ pixels, and assume the low-resolution image to include an additive white Gaussian noise. We showcase results on LSUN and CelebA in Figures \ref{fig:sr4_bedroom}, \ref{fig:sr4_celeba}, and \ref{fig:sr2_celeba}.

\begin{figure}
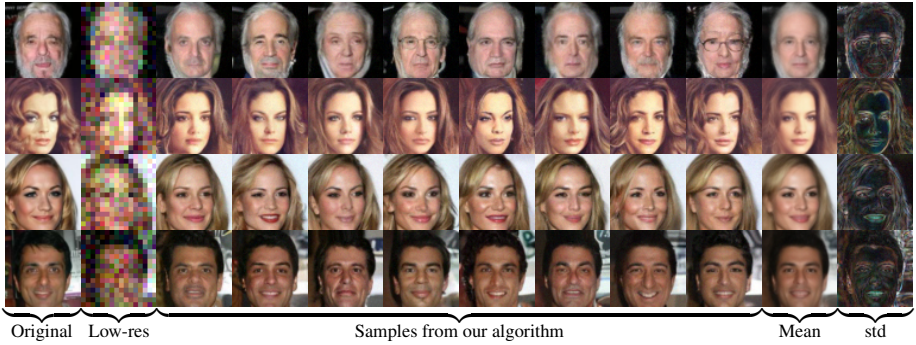

    \centering
    \def\arraystretch{0.1}
    \setlength\tabcolsep{0pt}
    \begin{tabular}{cccccccccccc}
        \forloop{row}{0}{\value{row} < 4}{
            \hspace{-0.05cm}\includegraphics[width=1cm,height=1cm]{./images/sr4/sample_\arabic{row}_column_0.png} \hspace{-0.17cm}
            \forloop{col}{1}{\value{col} < 12}{
                & \includegraphics[width=1cm,height=1cm]{./images/sr4/sample_\arabic{row}_column_\arabic{col}.png} \hspace{-0.17cm}
            } \\
        }
        & \multicolumn{11}{c}{\vspace{0.5mm}}\\
        \upbracefill & \upbracefill &
        \multicolumn{8}{c}{
            \upbracefill
        } &
        \upbracefill & \upbracefill
        \\
        & \multicolumn{11}{c}{\vspace{0.5mm}}\\
        \scriptsize{Original} & \scriptsize{Low-res} &
        \multicolumn{8}{c}{
            \scriptsize{Samples from our algorithm}
        } &
        \scriptsize{Mean} & \scriptsize{std}
    \end{tabular}
    \caption{Super resolution results on CelebA~\cite{liu2015celeba} images (downscaling $4:1$ by plain averaging and adding noise with $\sigma_0 = 0.1$).}
    \label{fig:sr4_celeba}
\end{figure}
\begin{figure}
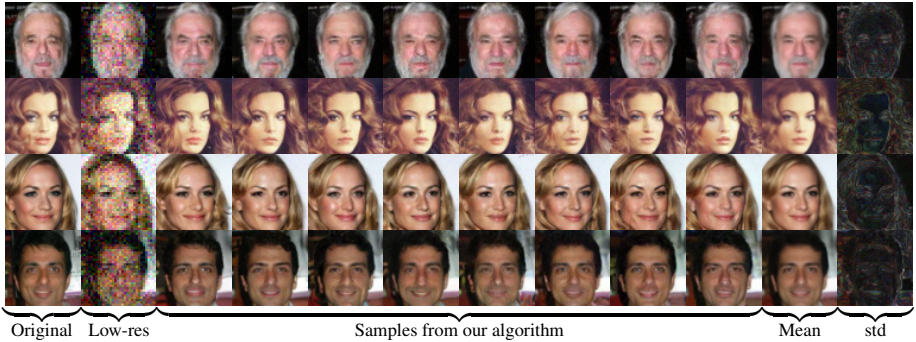

    \centering
    \def\arraystretch{0.1}
    \setlength\tabcolsep{0pt}
    \begin{tabular}{cccccccccccc}
        \forloop{row}{0}{\value{row} < 4}{
            \hspace{-0.05cm}\includegraphics[width=1cm,height=1cm]{./images/sr2/sample_\arabic{row}_column_0.png} \hspace{-0.17cm}
            \forloop{col}{1}{\value{col} < 12}{
                & \includegraphics[width=1cm,height=1cm]{./images/sr2/sample_\arabic{row}_column_\arabic{col}.png} \hspace{-0.17cm}
            } \\
        }
        & \multicolumn{11}{c}{\vspace{0.5mm}}\\
        \upbracefill & \upbracefill &
        \multicolumn{8}{c}{
            \upbracefill
        } &
        \upbracefill & \upbracefill
        \\
        & \multicolumn{11}{c}{\vspace{0.5mm}}\\
        \scriptsize{Original} & \scriptsize{Low-res} &
        \multicolumn{8}{c}{
            \scriptsize{Samples from our algorithm}
        } &
        \scriptsize{Mean} & \scriptsize{std}
    \end{tabular}
    \caption{Super resolution results on CelebA~\cite{liu2015celeba} images (downscaling $2:1$ by plain averaging and adding noise with $\sigma_0 = 0.1$).}
    \label{fig:sr2_celeba}
\end{figure}

\textbf{For compressive sensing}, we use three random projection matrices with singular values of $1$, that compress the image by $25\%$, $12.5\%$, and $6.25\%$. As can be seen in \autoref{fig:cs_celeba} and as expected, the more aggressive the compression, the more significant are the variations in reconstruction.

We calculate the average PSNR (peak signal-to-noise ratio) of each of the $8$ samples in our experiments, as well as the PSNR of their mean, as shown in \autoref{tab:psnrs}. In all the experiments, the empirical conditional mean presents an improvement of around $2.4$ dB in PSNR, even though it is less visually appealing compared to the samples.
This is consistent with the theory  in~\cite{blau2018perception}, which states that the difference in PSNR between posterior samples and the conditional mean (the MMSE estimator) should be $3$dB, with the MMSE estimator having poorer perceptual quality but better PSNR.

A comparison of our deblurring results to those obtained by RED~\cite{romano2017red} is detailed in \aref{sec:red}. We show that SNIPS exhibits superior performance over RED, achieving more than $11\%$ improvement in PSNR and more than $58\%$ improvement in LPIPS~\cite{lpips}, a perceptual quality metric.

\begin{table}
    \centering
    \caption{PSNR results for different inverse problems on 8 images from CelebA~\cite{liu2015celeba}. We ran SNIPS 8 times, and obtained 8 samples. The average PSNR for each of the samples is in the first column, while the average PSNR for the mean of the 8 samples for each image is in the second one.}
    \label{tab:psnrs}
    \begin{tabular}{ c  c c }
        \toprule
        \textbf{Problem} & \textbf{Sample PSNR} & \textbf{Mean PSNR} \\
        \midrule
        Uniform deblurring & $25.54$ & $28.01$ \\
        Super resolution (by $2$) & $25.58$ & $28.03$ \\
        Super resolution (by $4$) & $21.90$ & $24.31$ \\
        Compressive sensing (by $25\%$) & $25.68$ & $28.06$ \\
        Compressive sensing (by $12.5\%$) & $22.34$ & $24.67$ \\
        \bottomrule
    \end{tabular}
\end{table}

\subsection{Assessing Faithfulness to the Measurements}
A valid solution to an inverse problem should satisfy two conditions: (i) It should be visually pleasing, consistent with the underlying prior distribution of images, and (ii) It should be faithful to the given measurement, maintaining the relationship as given in the problem setting.
Since the prior distribution is unknown, we assess the first condition by visually observing the obtained solutions and their tendency to look realistic.
As for the second condition, we perform the following computation: We degrade the obtained reconstruction $\mathbf{\hat{x}}$ by $\mathbf{H}$, and calculate its difference from the given measurement $\mathbf{y}$, obtaining $\mathbf{y} - \mathbf{H \hat{x}}$. According to the problem setting, this difference should be an additive white Gaussian noise vector with a standard deviation of $\sigma_0$. We examine this difference by calculating its empirical standard deviation, and performing the Pearson-D'Agostino~\cite{dagostino} test of normality on it, accepting it as a Gaussian vector if the obtained p-value is greater than $0.05$. We also calculate the Pearson correlation coefficient (denoted as $\rho$) among neighboring entries, accepting them as uncorrelated for coefficients smaller than $0.1$ in absolute value. In all of our tests, the standard deviation matches $\sigma_0$ almost exactly, the Pearson correlation coefficient satisfies $\left|\rho\right| < 0.1$, and we obtain p-values greater than $0.05$ in around $95\%$ of the samples (across all experiments).
These results empirically show that our algorithm produces valid solutions to the given inverse problems.
\section{Conclusion and Future Work}
\label{sec:conclusion}
\begin{figure}
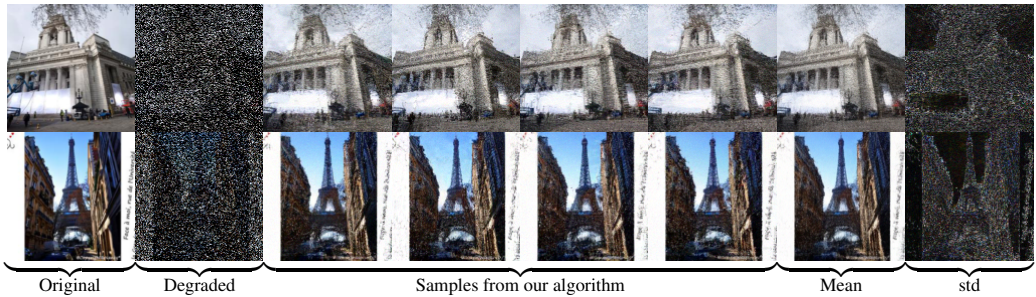

    \centering
    \def\arraystretch{0.1}
    \setlength\tabcolsep{0pt}
    \begin{tabular}{cccccccc}
        \forloop{row}{4}{\value{row} < 6}{
            \hspace{-0.05cm}\includegraphics[width=1.7cm,height=1.7cm]{./images/tower_cs4/sample_\arabic{row}_column_0.png} \hspace{-0.17cm}
            \forloop{col}{1}{\value{col} < 8}{
                & \includegraphics[width=1.7cm,height=1.7cm]{./images/tower_cs4/sample_\arabic{row}_column_\arabic{col}.png} \hspace{-0.17cm}
            } \\
        }
        & \multicolumn{7}{c}{\vspace{0.5mm}}\\
        \upbracefill & \upbracefill &
        \multicolumn{4}{c}{
            \upbracefill
        } &
        \upbracefill & \upbracefill
        \\
        & \multicolumn{7}{c}{\vspace{0.5mm}}\\
        \scriptsize{Original} & \scriptsize{Degraded} &
        \multicolumn{4}{c}{
            \scriptsize{Samples from our algorithm}
        } &
        \scriptsize{Mean} & \scriptsize{std}
    \end{tabular}
    \caption{Compressive sensing results on LSUN~\cite{yu2015lsun} tower images (compression by $25\%$ and adding noise with $\sigma_0=0.04$).}
    \label{fig:cs4_tower}
\end{figure}
SNIPS, presented in this paper, is a novel stochastic algorithm for solving general noisy linear inverse problems. This method is based on annealed Langevin dynamics and Newton’s method, and relies on the availability of a pre-trained Gaussian MMSE denoiser. SNIPS produces a random variety of high quality samples from the posterior distribution of the unknown given the measurements, while guaranteeing their validity with respect to the given data. This algorithm's derivation includes an intricate choice of the injected annealed noise in the Langevin update equations, and an SVD decomposition of the degradation operator for decoupling the measurements' dependencies. We demonstrate SNIPS' success on image deblurring, super resolution, and compressive sensing.

Extensions of this work should focus on SNIPS' limitations: (i) The need to deploy SVD decomposition of the degradation matrix requires a considerable amount of memory and computations, and hinders the algorithm’s scalability; (ii) The current version of SNIPS does not handle general content images, a fact that is related to the properties of the denoiser being used~\cite{ryu2019plug}; and (iii) SNIPS, as any other Langevin based method, requires (too) many iterations (\eg, in our super-resolution tests on CelebA, $2$ minutes are required for producing $8$ sample images), and means for its acceleration should be explored.

\bibliography{refs}


\appendix
\vfil 
\pagebreak
\section{Conditional Score Derivation Proofs}
\label{sec:proofs}
We would like to derive a term for $\nabla{\mathbf{\tilde{x}}_T} \log p\left(\mathbf{\tilde{x}}_T|\mathbf{y}_T\right)$ depending on known ingredients such as $\mathbf{\tilde{x}}_T$, $\mathbf{y}_T$, $\sigma_0$, $\sigma_i$ and the SVD components of $\mathbf{H}$, as well as the blurred prior score function $\nabla{\mathbf{\tilde{x}}_T} \log p\left(\mathbf{\tilde{x}}_T\right)$, which can be estimated using a neural network.
To that end, in accordance with the definitions of $\mathbf{v}_0$, $\mathbf{v}_<$ and $\mathbf{v}_>$, for a matrix $\mathbf{M}$ we define $\mathbf{M}_0, \mathbf{M}_<, \mathbf{M}_>$ as leading minors of $\mathbf{M}$ with subsets of rows and columns extracted accordingly from the above-defined partition.
Recalling \autoref{eqn:measurements}, we have
\begin{equation}
    \label{eqn:replace}
     \mathbf{y}_T - \mathbf{\Sigma} \mathbf{\tilde{x}}_T =  \mathbf{U}^T \mathbf{z} -\mathbf{\Sigma V}^T \mathbf{n}.
\end{equation}
Observe that the entries of the right-hand-side vector are statistically independent, and their distribution for $s_j < \sigma_0 / \sigma_i$ is given by
\begin{equation}
    \label{eqn:z_small}
    {\left( \mathbf{U}^T \mathbf{z} -\mathbf{\Sigma V}^T \mathbf{n} \right)_{<} \sim \mathcal{N}\left(0, \sigma_0^2 \mathbf{I} - \sigma_i^2 \mathbf{\Sigma}_< \mathbf{\Sigma}_<^T\right)}.
\end{equation}
This is a direct result of \autoref{eqn:relation_to_z}, obtained by simply aggregating the different entries $j$ into a vector.
Similarly,
\begin{equation}
\label{eqn:z_big}
{\left(\mathbf{V}^T \mathbf{n} - \mathbf{\Sigma}^{\dagger} \mathbf{U}^T \mathbf{z}\right)_{\mathbf{>}} \sim \mathcal{N}\left(0, \sigma_i^2 \mathbf{I} - \sigma_0^2 \mathbf{\Sigma}_>^{-1} \mathbf{\Sigma}_>^{-1^T}\right)},
\end{equation}
obtained from aggregating the entries from \autoref{eqn:relation_to_z} into a vector, and multiplying it by $\mathbf{\Sigma}_>^{-1}$. Notice that $\mathbf{\Sigma}_>$ is a diagonal square matrix, and thus invertible. The above two formulae will be used in the following analysis.

\renewcommand{\qedsymbol}{$\blacksquare$}
\newtheorem{theorem}{Theorem}

\begin{theorem}
Given $\mathbf{y} = \mathbf{Hx} + \mathbf{z}$, $\mathbf{z} \sim \mathcal{N}\left(0, \sigma_0^2 \mathbf{I}\right)$, $\mathbf{H} = \mathbf{U \Sigma V}^T$ is the SVD decomposition of $\mathbf{H}$, $\mathbf{y}_T = \mathbf{U}^T \mathbf{y}$,
$\mathbf{n} = \mathbf{n}_i$ as constructed in \autoref{sec:problem},
$\mathbf{\tilde{x}} = \mathbf{\tilde{x}}_i = \mathbf{x} + \mathbf{n}$, $\mathbf{\tilde{x}}_T = \mathbf{V}^T \mathbf{\tilde{x}}$, $\mathbf{{x}}_T = \mathbf{V}^T \mathbf{{x}}$, the conditional score is approximately given by:
\[
    \nabla_{ \mathbf{\tilde{x}}_T } \log p\left(\mathbf{\tilde{x}}_T | \mathbf{y}_T\right) =
    \mathbf{\Sigma}^T \left| \sigma_0^2 \mathbf{I} - \sigma_i^2 \mathbf{\Sigma \Sigma}^T \right|^{\dagger}
    \left( \mathbf{y}_T - \mathbf{\Sigma}  \mathbf{\tilde{x}}_T \right) +
    \left. \left( \mathbf{V}^T \nabla_{\mathbf{\tilde{x}}} \log p\left(\mathbf{\tilde{x}}\right) \right) \right|_{\not{>}}
\]
\end{theorem}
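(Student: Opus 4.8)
The plan is to start from Bayes' rule, $\log p(\tilde{\mathbf{x}}_T\mid\mathbf{y}_T) = \log p(\mathbf{y}_T\mid\tilde{\mathbf{x}}_T) + \log p(\tilde{\mathbf{x}}_T) + \mathrm{const}$, and treat the two resulting terms separately. The prior term is immediate: since $\tilde{\mathbf{x}}_T = \mathbf{V}^T\tilde{\mathbf{x}}$ with $\mathbf{V}$ orthogonal (unit Jacobian), the chain rule gives $\nabla_{\tilde{\mathbf{x}}_T}\log p(\tilde{\mathbf{x}}_T) = \mathbf{V}^T\nabla_{\tilde{\mathbf{x}}}\log p(\tilde{\mathbf{x}})$, which is exactly the quantity the denoiser supplies (hence the ``approximately'' in the statement). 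The real work is the likelihood term, and for this I would partition the coordinates $j$ into the three regimes $s_j = 0$, $\sigma_i s_j < \sigma_0$, and $\sigma_i s_j > \sigma_0$ used in \autoref{sec:grad_calc}, exploiting the fact that in the SVD domain the injected and measurement noises have uncorrelated entries, so the coordinates decouple and each can be analyzed as a scalar.

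Two of the three regimes are straightforward. For $s_j = 0$ the coordinate $\tilde{x}_{T,j}$ is annihilated by $\mathbf{\Sigma}$, so $\mathbf{y}_T$ is uninformative about it, the likelihood contributes nothing, and only the prior term survives, giving \autoref{eqn:grad_0}. For $\sigma_i s_j < \sigma_0$, \autoref{eqn:z_small} tells us that the residual $\mathbf{y}_T - \mathbf{\Sigma}\tilde{\mathbf{x}}_T$ is zero-mean Gaussian with variance $\sigma_0^2 - \sigma_i^2 s_j^2$ and---crucially, by the noise construction of \autoref{sec:problem}---independent of $\tilde{\mathbf{x}}_T$; hence $p(\mathbf{y}_T\mid\tilde{\mathbf{x}}_T)$ is a genuine Gaussian in $\mathbf{\Sigma}\tilde{\mathbf{x}}_T$, and differentiating its log produces the measurement term $\mathbf{\Sigma}^T(\sigma_0^2\mathbf{I} - \sigma_i^2\mathbf{\Sigma\Sigma}^T)^{\dagger}(\mathbf{y}_T - \mathbf{\Sigma}\tilde{\mathbf{x}}_T)$ of \autoref{eqn:grad_s}, added to the retained prior term.

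The main obstacle is the regime $\sigma_i s_j > \sigma_0$, where the naive likelihood split breaks down: here the injected noise has absorbed all of $\mathbf{z}_T$, so the residual $\mathbf{y}_T - \mathbf{\Sigma}\tilde{\mathbf{x}}_T$ is \emph{correlated} with $\tilde{\mathbf{x}}_T$ and $p(\mathbf{y}_T\mid\tilde{\mathbf{x}}_T)$ is no longer the clean Gaussian above. The device I would use is the alternative residual of \autoref{eqn:z_big}: setting $r_j := (\mathbf{V}^T\mathbf{n} - \mathbf{\Sigma}^{\dagger}\mathbf{U}^T\mathbf{z})_j$, a one-line substitution of $\mathbf{y}_T = \mathbf{\Sigma}\mathbf{x}_T + \mathbf{z}_T$ and $\tilde{\mathbf{x}}_T = \mathbf{x}_T + \mathbf{V}^T\mathbf{n}$ yields the exact identity $\tilde{x}_{T,j} = y_{T,j}/s_j + r_j$, where $r_j\sim\mathcal{N}(0,\,\sigma_i^2 - \sigma_0^2/s_j^2)$ is independent of both $\mathbf{x}_T$ and $\mathbf{z}_T$, hence of $\mathbf{y}_T$. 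Therefore $p(\tilde{x}_{T,j}\mid\mathbf{y}_T) = \mathcal{N}(y_{T,j}/s_j,\ \sigma_i^2 - \sigma_0^2/s_j^2)$ carries \emph{no} dependence on the prior, and differentiating its log gives precisely the measurement-only expression of \autoref{eqn:grad_b}. This independence---the reason the prior score is absent on the $>$ block---is exactly what the intricate correlation design between $\text{\boldmath$\eta$}_i$ and $\mathbf{z}$ was engineered to secure, and verifying it rigorously (including that $r_j$ is independent of the remaining coordinates, so the block decouples from the others given $\mathbf{y}_T$) is the crux of the argument.

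Finally I would reassemble the blocks. The $<$ and $>$ measurement terms differ only in the sign of $\sigma_0^2 - \sigma_i^2 s_j^2$, so taking the entrywise absolute value fuses them into the single operator $|\sigma_0^2\mathbf{I} - \sigma_i^2\mathbf{\Sigma\Sigma}^T|^{\dagger}$, while the leading $\mathbf{\Sigma}^T$ automatically annihilates the $s_j = 0$ coordinates, reproducing the first term of the claim. The prior terms from the $s_j = 0$ and $<$ blocks combine into $(\mathbf{V}^T\nabla_{\tilde{\mathbf{x}}}\log p(\tilde{\mathbf{x}}))|_{\not{>}}$---the restriction that zeroes out exactly the $>$ coordinates, consistent with their absence in the previous paragraph. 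Stacking the three blocks then gives the stated formula.
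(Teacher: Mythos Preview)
Your treatment of the $s_j=0$ and $\sigma_i s_j<\sigma_0$ blocks is essentially the paper's: Bayes' rule, then a coordinate-wise split of the likelihood using the element-wise independence of the SVD-domain noises.

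For the $\sigma_i s_j>\sigma_0$ block you take a genuinely different route. You observe $\tilde{x}_{T,j}=y_{T,j}/s_j+r_j$ with $r_j$ independent of $(\mathbf{y}_T,\tilde{\mathbf{x}}_{T,\not{>}})$, and conclude that $\tilde{\mathbf{x}}_{T,>}$ is conditionally independent of $\tilde{\mathbf{x}}_{T,\not{>}}$ given $\mathbf{y}_T$; the posterior on the $>$ block is then the Gaussian you write, and the prior term is absent \emph{exactly}. The paper instead factors $p(\tilde{\mathbf{x}}_T\mid\mathbf{y}_T)=p(\tilde{\mathbf{x}}_{T,0}\mid\tilde{\mathbf{x}}_{T,\not{0}},\mathbf{y}_T)\,p(\tilde{\mathbf{x}}_{T,\not{0}}\mid\mathbf{y}_T)$, derives the measurement term from the second factor, and reduces the $\tilde{\mathbf{x}}_{T,>}$-gradient of the first factor (via Tweedie) to the difference $\mathbb{E}[\mathbf{x}_{>}\mid\tilde{\mathbf{x}}]-\mathbb{E}[\mathbf{x}_{>}\mid\tilde{\mathbf{x}}_{\not{0}}]$, which is then \emph{assumed} negligible---this assumption is the source of the word ``approximately'' in the theorem. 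Your conditional-independence argument, once the joint independence of $r_{>}$ from $(\mathbf{x},\mathbf{z}_T,(\mathbf{V}^T\mathbf{n})_{\not{>}})$ is checked (it follows from the element-wise coupling design in \autoref{sec:problem}), delivers the same formula without that extra assumption---so your route is arguably cleaner here. What the paper's route buys is an explicit identification of the discarded term as a difference of denoiser outputs, which gives some intuition for when the approximation (in their framing) is tight.
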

\begin{proof}
We split our derivation into three cases: $\mathbf{\tilde{x}}_{T,0}$, $\mathbf{\tilde{x}}_{T,>}$, and $\mathbf{\tilde{x}}_{T,<}$, and then concatenate the results.

\textbf{For the case of $\mathbf{\tilde{x}}_{T,0}$}, we calculate using the Bayes rule:
\[
    \nabla_{\mathbf{\tilde{x}}_{T,0}} \log p\left(\mathbf{\tilde{x}}_T | \mathbf{y}_T\right) =
    \nabla_{\mathbf{\tilde{x}}_{T,0}} \log p\left(\mathbf{y}_T | \mathbf{\tilde{x}}_T\right) + \nabla_{\mathbf{\tilde{x}}_{T,0}} \log p\left(\mathbf{\tilde{x}}_T\right).
\]
Deriving by $\mathbf{\tilde{x}}_{T,0}$ is the same as deriving by $\mathbf{\tilde{x}}_{T}$ and then taking the part referring to zero singular values of $\mathbf{H}$. Thus, the second term becomes $\left( \nabla_{\mathbf{\tilde{x}}_T } \log p\left(\mathbf{\tilde{x}}_T \right) \right)_{0}$. As for the first term, we can subtract the vector $\mathbf{\Sigma} \mathbf{\tilde{x}}_T$ without changing the statistics because it is a known quantity in this setting, resulting in
\begin{align*}
     \nabla_{\mathbf{\tilde{x}}_{T,0}} \log p\left(\mathbf{\tilde{x}}_T | \mathbf{y}_T\right) & =
    \nabla_{\mathbf{\tilde{x}}_{T,0}} \log p\left(\mathbf{y}_T - \mathbf{\Sigma} \mathbf{\tilde{x}}_T  | \mathbf{\tilde{x}}_T \right) + \left( \nabla_{\mathbf{\tilde{x}}_T } \log p\left(\mathbf{\tilde{x}}_T \right) \right)_{0} \\
    & =
    \nabla_{\mathbf{\tilde{x}}_{T,0}} \log p\left(\mathbf{U}^{T} \mathbf{z} - \mathbf{\Sigma} \mathbf{V}^T \mathbf{n} | \mathbf{\tilde{x}}_T\right) + \left( \nabla_{\mathbf{\tilde{x}}_T} \log p\left(\mathbf{\tilde{x}}_T\right) \right)_{0}.
\end{align*}
The last equality holds due to \autoref{eqn:replace}. Referring to the first term, because the entries of the vector are independent, we can split the probability density function into a product of two such functions for two parts of the vector, as follows:
\[
    \nabla_{\mathbf{\tilde{x}}_{T,0}} \left[ \log p\left(\left(\mathbf{U}^{T} \mathbf{z} - \mathbf{\Sigma V}^T \mathbf{n}\right)_0 | \mathbf{\tilde{x}}_T\right) + \log p\left(\left(\mathbf{U}^{T} \mathbf{z} - \mathbf{\Sigma V}^T \mathbf{n}\right)_{\not{0}} | \mathbf{\tilde{x}}_T\right) \right].
\]
The entries of $\left(\mathbf{U}^{T} \mathbf{z} - \mathbf{\Sigma V}^T \mathbf{n}\right)_{\not{0}}$ were defined element-wise as gradual noise additions, statistically independent of the entries of $\mathbf{\tilde{x}}_{T,0}$. Therefore, the conditioning on $\mathbf{\tilde{x}}_T$ is equivalent to conditioning on $\mathbf{\tilde{x}}_{T,\not{0}}$. Deriving this log-probability by $\mathbf{\tilde{x}}_{T,0}$ results in zero.
As for the first term, $\left(\mathbf{\Sigma V}^T \mathbf{n}\right)_{0}$ is zero due to the definition of $\mathbf{\Sigma}$, and $\left(\mathbf{U}^T \mathbf{z}\right)_{0}$ is a Gaussian vector that is independent of $\mathbf{\tilde{x}}_{T,0}$, resulting in
\begin{align*}
    \nabla_{\mathbf{\tilde{x}}_{T,0}} \log p\left(\mathbf{\tilde{x}}_T | \mathbf{y}_T\right) & =
    \nabla_{\mathbf{\tilde{x}}_{T,0}} \log p\left(\left(\mathbf{U}^{T} \mathbf{z} - \mathbf{\Sigma V}^T \mathbf{n}\right)_{0} | \mathbf{\tilde{x}}_T\right) + \left( \nabla_{\mathbf{\tilde{x}}_T} \log p\left(\mathbf{\tilde{x}}_T\right) \right)_0 \\
& =
    \nabla_{\mathbf{\tilde{x}}_{T,0}} \log p\left(\left(\mathbf{U}^{T} \mathbf{z}\right)_{0} | \mathbf{\tilde{x}}_T\right) + \left( \nabla_{\mathbf{\tilde{x}}_T} \log p\left(\mathbf{\tilde{x}}_T\right) \right)_0 \\
& =
    \left( \nabla_{\mathbf{\tilde{x}}_T} \log p\left(\mathbf{\tilde{x}}_T\right) \right)_0 \\
& =
    \left(\nabla_{\mathbf{\tilde{x}}_T} \log p\left(\mathbf{\tilde{x}}\right) \right)_0 \\
& =
    \left(\mathbf{V}^T \nabla_{\mathbf{\tilde{x}}} \log p\left(\mathbf{\tilde{x}}\right) \right)_0.
\end{align*}
The second last equality holds because $\mathbf{\tilde{x}} = \mathbf{V} \mathbf{\tilde{x}}_T$, and multiplication by the orthogonal matrix $\mathbf{V}$ does not change the statistics of the variable.
The last equality holds due to the multivariate chain rule: $\nabla_{\mathbf{x}} f\left(\mathbf{y}\right) = \mathbf{J}\left(\mathbf{y}\left(\mathbf{x}\right)\right) \nabla_{\mathbf{y}} f\left(\mathbf{y}\right)$, where $\mathbf{J}\left(\mathbf{y}\left(\mathbf{x}\right)\right)$ is the Jacobian matrix of $\mathbf{y}$ w.r.t. $\mathbf{x}$. Finally, we obtain
\begin{equation}
    \label{eqn:apdx_zero}
    \nabla_{\mathbf{\tilde{x}}_{T,0}} \log p\left(\mathbf{\tilde{x}}_T | \mathbf{y}_T\right) =
    \left(\mathbf{V}^T \nabla_{\mathbf{\tilde{x}}} \log p\left(\mathbf{\tilde{x}}\right) \right)_0.
\end{equation}

\textbf{For the case of $\mathbf{\tilde{x}}_{T,>}$}, using the definition of the conditional distribution we get:
\begin{align}
\label{eqn:big_split}
\begin{split}
    \nabla_{\mathbf{\tilde{x}}_{T,>}} \log p\left(\mathbf{\tilde{x}}_T | \mathbf{y}_T\right) & =
    \nabla_{\mathbf{\tilde{x}}_{T,>}} \log p\left(\mathbf{\tilde{x}}_{T,0}, \mathbf{\tilde{x}}_{T,\not{0}} | \mathbf{y}_T\right)\\
& =
    \nabla_{\mathbf{\tilde{x}}_{T,>}} \log p\left(\mathbf{\tilde{x}}_{T,0} | \mathbf{\tilde{x}}_{T,\not{0}}, \mathbf{y}_T\right) +
    \nabla_{\mathbf{\tilde{x}}_{T,>}} \log p\left(\mathbf{\tilde{x}}_{T,\not{0}} | \mathbf{y}_T\right).
\end{split}
\end{align}
Focusing on the second term, we calculate, with a similar reasoning as above and get:
\[
    \nabla_{\mathbf{\tilde{x}}_{T,>}} \log p\left(\mathbf{\tilde{x}}_{T,\not{0}} | \mathbf{y}_T\right) =
    \nabla_{\mathbf{\tilde{x}}_{T,>}} \log p\left(\left(\mathbf{\tilde{x}}_T - \mathbf{\Sigma}^{\dagger} \mathbf{y}_T\right)_{\not{0}} | \mathbf{y}_T\right).
\]
Substituting $\mathbf{\tilde{x}}_T = \mathbf{V}^T \mathbf{x} + \mathbf{V}^T \mathbf{n}$, $\mathbf{y}_T = \mathbf{U}^T \mathbf{H x} + \mathbf{U}^T \mathbf{z}$, $\mathbf{H} = \mathbf{U \Sigma V}^T$ leads to
\begin{align*}
    \nabla_{\mathbf{\tilde{x}}_{T,>}} \log p\left(\mathbf{\tilde{x}}_{T, \not{0}} | \mathbf{y}_T\right) & =
    \nabla_{\mathbf{\tilde{x}}_{T,>}} \log p\left(\left(\mathbf{V}^T \mathbf{x} + \mathbf{V}^T \mathbf{n} - \mathbf{\Sigma}^{\dagger} \left( \mathbf{U}^T \mathbf{H} \mathbf{x} + \mathbf{U}^T \mathbf{z} \right)\right)_{\not{0}} | \mathbf{y}_T\right) \\
& =
    \nabla_{\mathbf{\tilde{x}}_{T,>}} \log p\left(\left(\mathbf{V}^T \mathbf{x} + \mathbf{V}^T \mathbf{n} - \mathbf{\Sigma}^{\dagger} \mathbf{U}^T \mathbf{U \Sigma V}^T \mathbf{x} - \mathbf{\Sigma}^{\dagger} \mathbf{U}^T \mathbf{z}\right)_{\not{0}} | \mathbf{y}_T\right) \\
& =
    \nabla_{\mathbf{\tilde{x}}_{T,>}} \log p\left(\left(\mathbf{V}^T \mathbf{n} - \mathbf{\Sigma}^{\dagger} \mathbf{U}^T \mathbf{z} + \left( \mathbf{I} - \mathbf{\Sigma}^\dagger \mathbf{\Sigma} \right) \mathbf{x}_T\right)_{\not{0}} | \mathbf{y}_T\right).
\end{align*}
The last equality holds because $\mathbf{U}^T \mathbf{U} = \mathbf{I}$. Observe that $\left( \mathbf{I} - \mathbf{\Sigma}^\dagger \mathbf{\Sigma} \right) \mathbf{x}_T$ is zero everywhere except in the $0$ part of the vector, which we discard because of the $\not{0}$ notation.
We can split this term into two parts, as before,
\[
    \nabla_{\mathbf{\tilde{x}}_{T,>}} \left[ \log p\left(\left(\mathbf{V}^T \mathbf{n} - \mathbf{\Sigma}^{\dagger} \mathbf{U}^T \mathbf{z}\right)_{>} | \mathbf{y}_T\right)
    +  \log p\left(\left(\mathbf{V}^T \mathbf{n} - \mathbf{\Sigma}^{\dagger} \mathbf{U}^T \mathbf{z}\right)_{<} | \mathbf{y}_T\right)
    \right].
\]
The derivative of the second term (the $<$ part) is zero, because this vector was built element-wise as gradual noise additions, independent of $\mathbf{\tilde{x}}_{T,>}$. This results in
\begin{align*}
    \nabla_{\mathbf{\tilde{x}}_{T,>}} \log p\left(\mathbf{\tilde{x}}_{T, \not{0}} | \mathbf{y}_T\right)
    & = \nabla_{\mathbf{\tilde{x}}_{T,>}} \log p\left(\left(\mathbf{V}^T \mathbf{n} - \mathbf{\Sigma}^{\dagger} \mathbf{U}^T \mathbf{z}\right)_{>} | \mathbf{y}_T\right) \\
& =
    \nabla_{\mathbf{\tilde{x}}_{T,>}} \log p\left(\left(\mathbf{\tilde{x}}_T - \mathbf{\Sigma}^{\dagger} \mathbf{y}_T\right)_{>} | \mathbf{y}_T\right)
\end{align*}
This is the gradient-log of a Gaussian density function of the vector ${\left(\mathbf{\tilde{x}}_T - \mathbf{\Sigma}^{\dagger} \mathbf{y}_T\right)_>}$, known to have a zero mean and a covariance matrix ${\sigma_i^2 \mathbf{I} - \sigma_0^2 \mathbf{\Sigma}_>^{-1} \mathbf{\Sigma}_>^{-1^T}}$, according to \autoref{eqn:z_big}. Thus, we use the known Gaussian gradient-log and conclude:
\begin{align*}
    \nabla_{\mathbf{\tilde{x}}_{T,>}} \log p\left(\mathbf{\tilde{x}}_{T, \not{0}} | \mathbf{y}_T\right) & =
    \left( \sigma_i^2 \mathbf{I} - \sigma_0^2 \mathbf{\Sigma}_>^{-1} \mathbf{\Sigma}_>^{-1^T} \right)^{-1} \left(\mathbf{\Sigma}^{\dagger} \mathbf{y}_T - \mathbf{\tilde{x}}_T\right)_{>} \\
& =
    \left( \mathbf{\Sigma}_>^{-1} \left( \mathbf{\Sigma}_> \sigma_i^2 \mathbf{I} \mathbf{\Sigma}_>^{T} - \sigma_0^2 \mathbf{I} \right) \mathbf{\Sigma}_>^{-1^T} \right)^{-1}
    \left(\mathbf{\Sigma}^{\dagger} \mathbf{y}_T - \mathbf{\tilde{x}}_T\right)_{>} \\
& =
    \mathbf{\Sigma}_>^{T}
    \left( \sigma_i^2 \mathbf{\Sigma}_> \mathbf{\Sigma}_>^{T} - \sigma_0^2 \mathbf{I} \right)^{-1}
    \mathbf{\Sigma}_> \left(\mathbf{\Sigma}^{\dagger} \mathbf{y}_T - \mathbf{\tilde{x}}_T\right)_{>}.
\end{align*}
Multiplying a certain part of a diagonal matrix (in this case, the $>$ part) by the corresponding part of a vector is the same as multiplying the original matrix and vector, and then taking the relevant part. This results in
\begin{align}
\label{eqn:big_1st}
\begin{split}
    \nabla_{\mathbf{\tilde{x}}_{T,>}} \log p\left(\mathbf{\tilde{x}}_{T, \not{0}} | \mathbf{y}_T\right) & =
    \mathbf{\Sigma}_>^{T}
    \left( \sigma_i^2 \mathbf{\Sigma}_> \mathbf{\Sigma}_>^{T} - \sigma_0^2 \mathbf{I} \right)^{-1}
    \left(\mathbf{\Sigma} \mathbf{\Sigma}^{\dagger} \mathbf{y}_T - \mathbf{\Sigma} \mathbf{\tilde{x}}_T\right)_{>} \\
& = 
    \mathbf{\Sigma}_>^{T}
    \left( \sigma_i^2 \mathbf{\Sigma}_> \mathbf{\Sigma}_>^{T} - \sigma_0^2 \mathbf{I} \right)^{-1}
    \left(\mathbf{y}_T - \mathbf{\Sigma} \mathbf{\tilde{x}}_T\right)_{>} \\
& =
    \left( \mathbf{\Sigma}^{T}
    \left( \sigma_i^2 \mathbf{\Sigma \Sigma}^T - \sigma_0^2 \mathbf{I} \right)^{-1}
    \left(\mathbf{y}_T - \mathbf{\Sigma} \mathbf{\tilde{x}}_T\right) \right)_{>}.
\end{split}
\end{align}
As for the first term in \autoref{eqn:big_split}, which is $\nabla_{\mathbf{\tilde{x}}_{T,>}} \log p\left(\mathbf{\tilde{x}}_{T,0} | \mathbf{\tilde{x}}_{T,\not{0}}, \mathbf{y}_T\right)$,
we can rewrite it as $\nabla_{\mathbf{\tilde{x}}_{T,>}} \log p\left(\mathbf{\tilde{x}}_{T,0} | \mathbf{\tilde{x}}_{T,\not{0}}, \mathbf{\Sigma}_{\not{0}}^{-1} \mathbf{y}_T\right)$ because $\mathbf{\Sigma}_{\not{0}}^{-1}$ is an orthogonal matrix that does not add or remove information.
Furthermore, we notice that the difference $\mathbf{\tilde{x}}_{T,\not{0}} - \mathbf{\Sigma}_{\not{0}}^{-1} \mathbf{y}_T$ was defined element-wise as gradual noise additions, independent of $\mathbf{\tilde{x}}_{T,0}$.
Therefore, the term can be rewritten as $\nabla_{\mathbf{\tilde{x}}_{T,>}} \log p\left(\mathbf{\tilde{x}}_{T,0} | \mathbf{\tilde{x}}_{T,\not{0}}\right)$. We calculate using the definition of the conditional distribution:
\begin{align*}
    \nabla_{\mathbf{\tilde{x}}_{T,>}} \log p\left(\mathbf{\tilde{x}}_{T,0} | \mathbf{\tilde{x}}_{T,\not{0}}\right) & =
    \nabla_{\mathbf{\tilde{x}}_{T,>}} \log \frac{p\left(\mathbf{\tilde{x}}_{T,0}, \mathbf{\tilde{x}}_{T,\not{0}}\right)}{p\left(\mathbf{\tilde{x}}_{T,\not{0}}\right)} \\
& =
    \nabla_{\mathbf{\tilde{x}}_{T,>}} \log p\left(\mathbf{\tilde{x}}_{T,0}, \mathbf{\tilde{x}}_{T,\not{0}}\right) -
    \nabla_{\mathbf{\tilde{x}}_{T,>}} \log p\left(\mathbf{\tilde{x}}_{T,\not{0}}\right) \\
& =
    \nabla_{\mathbf{\tilde{x}}_{T,>}} \log p\left(\mathbf{\tilde{x}}_T\right) -
    \nabla_{\mathbf{\tilde{x}}_{T,>}} \log p\left(\mathbf{\tilde{x}}_{T,\not{0}}\right)  \\
& =
    \left(\nabla_{\mathbf{\tilde{x}}_T} \log p\left(\mathbf{\tilde{x}}_T\right)\right)_> -
    \left( \nabla_{\mathbf{\tilde{x}}_{T,\not{0}}} \log p\left(\mathbf{\tilde{x}}_{T,\not{0}}\right)\right)_{>}\\
& =
    \left(\mathbf{V}^T \nabla_{\mathbf{\tilde{x}}} \log p\left(\mathbf{\tilde{x}}\right)\right)_> -
    \left(\mathbf{V}_{\not{0}}^T \nabla_{\mathbf{\tilde{x}}_{\not{0}}} \log p\left(\mathbf{\tilde{x}}_{\not{0}}\right)\right)_{>}\\
& =
    \mathbf{V}_>^T \left(\nabla_{\mathbf{\tilde{x}}} \log p\left(\mathbf{\tilde{x}}\right)\right)_> -
    \mathbf{V}_>^T  \left(\nabla_{\mathbf{\tilde{x}}_{\not{0}}} \log p\left(\mathbf{\tilde{x}}_{\not{0}}\right)\right)_{>}.
\end{align*}
The second last equality holds due to the chain rule, and the last one holds because multiplying the $>$ part of a diagonal matrix by the corresponding part of a vector is the same as multiplying the original matrix and vector, and then taking the relevant part, as previously mentioned.
Recalling \autoref{eqn:denoiser}, we can substitute both terms by their denoiser counterparts, obtaining
\begin{align*}
    \nabla_{\mathbf{\tilde{x}}_{T,>}} \log p\left(\mathbf{\tilde{x}}_{T,0} | \mathbf{\tilde{x}}_{T,\not{0}}\right) & =
    \mathbf{V}_>^T \left(\frac{\mathbb{E}\left[\mathbf{x} | \mathbf{\tilde{x}}\right] - \mathbf{\tilde{x}}}{\sigma_i^2} \right)_> -
    \mathbf{V}_>^T \left(\frac{\mathbb{E}\left[\mathbf{x}_{\not{0}} | \mathbf{\tilde{x}}_{\not{0}}\right] - \mathbf{\tilde{x}}_{\not{0}}}{\sigma_i^2} \right)_> \\
& =
    \frac{1}{\sigma_i^2} \mathbf{V}_>^T \left( \left(\mathbb{E}\left[\mathbf{x} | \mathbf{\tilde{x}}\right]\right)_> - \mathbf{\tilde{x}}_> - \left(\mathbb{E}\left[\mathbf{x}_{\not{0}} | \mathbf{\tilde{x}}_{\not{0}}\right]\right)_> + \mathbf{\tilde{x}}_{>} \right) \\
& =
    \frac{1}{\sigma_i^2} \mathbf{V}_>^T \left( \left(\mathbb{E}\left[\mathbf{x} | \mathbf{\tilde{x}}\right]\right)_> - \left(\mathbb{E}\left[\mathbf{x}_{\not{0}} | \mathbf{\tilde{x}}_{\not{0}}\right]\right)_> \right) \\
& =
    \frac{1}{\sigma_i^2} \mathbf{V}_>^T \left( \mathbb{E}\left[\mathbf{x}_> | \mathbf{\tilde{x}}\right] - \mathbb{E}\left[\mathbf{x}_> | \mathbf{\tilde{x}}_{\not{0}}\right] \right).
\end{align*}
We obtained a difference betwen two terms, both of which calculate an expectation of $\mathbf{x}_>$ given $\mathbf{\tilde{x}}_{\not{0}}$, with the first term including the extra knowledge of $\mathbf{\tilde{x}}_0$. We introduce an assumption that this additional information does not significantly change the estimation of $\mathbf{x}_>$, especially since a noisy version of it, $\mathbf{\tilde{x}}_>$, is given in both estimators. As a result, we obtain the approximation
\begin{equation}
    \label{eqn:big_2nd}
    \nabla_{\mathbf{\tilde{x}}_{T,>}} \log p\left(\mathbf{\tilde{x}}_{T,0} | \mathbf{\tilde{x}}_{T,\not{0}}\right) \approx \mathbf{0}.
\end{equation}

To conclude this part, we combine Equations \ref{eqn:big_split}, \ref{eqn:big_1st} and \ref{eqn:big_2nd} and obtain the approximate relation
\begin{equation}
    \label{eqn:apdx_big}
    \nabla_{\mathbf{\tilde{x}}_{T,>}} \log p\left(\mathbf{\tilde{x}}_T | \mathbf{y}_T\right) =
    \left( \mathbf{\Sigma}^{T}
    \left( \sigma_i^2 \mathbf{\Sigma \Sigma}^T - \sigma_0^2 \mathbf{I} \right)^{-1}
    \left(\mathbf{y}_T - \mathbf{\Sigma} \mathbf{\tilde{x}}_T\right) \right)_{>}.
\end{equation}

\textbf{For the case of $\mathbf{\tilde{x}}_{T,<}$}, we calculate using the Bayes rule, with similar reasoning to previous cases:
\begin{align*}
    \nabla_{\mathbf{\tilde{x}}_{T,<}} \log p\left(\mathbf{\tilde{x}}_T | \mathbf{y}_T\right) & =
    \nabla_{\mathbf{\tilde{x}}_{T,<}} \log p\left(\mathbf{y}_T | \mathbf{\tilde{x}}_T\right) + \nabla_{\mathbf{\tilde{x}}_{T,<}} \log p\left(\mathbf{\tilde{x}}_T\right) \\
& =
    \nabla_{\mathbf{\tilde{x}}_{T,<}} \log p\left(\mathbf{y}_T - \mathbf{\Sigma} \mathbf{\tilde{x}}_T | \mathbf{\tilde{x}}_T\right) + \left( \nabla_{\mathbf{\tilde{x}}_T} \log p\left(\mathbf{\tilde{x}}_T\right) \right)_< \\
& =
    \nabla_{\mathbf{\tilde{x}}_{T,<}} \log p\left(\mathbf{U}^{T} \mathbf{z} - \mathbf{\Sigma V}^T \mathbf{n} | \mathbf{\tilde{x}}_T\right) + \left( \nabla_{\mathbf{\tilde{x}}_T} \log p\left(\mathbf{\tilde{x}}_T\right) \right)_<.
\end{align*}
Similar to the first case, we can split the first term in the same fashion and obtain
\[
    \nabla_{\mathbf{\tilde{x}}_{T,<}} \left[ \log p\left(\left(\mathbf{U}^{T} \mathbf{z} - \mathbf{\Sigma V}^T \mathbf{n}\right)_< | \mathbf{\tilde{x}}_T\right) +  \log p\left(\left(\mathbf{U}^{T} \mathbf{z} - \mathbf{\Sigma V}^T \mathbf{n}\right)_{\not{<}} | \mathbf{\tilde{x}}_T\right) \right] =
\]
\[ =
    \nabla_{\mathbf{\tilde{x}}_{T,<}} \log p\left(\left(\mathbf{U}^{T} \mathbf{z} - \mathbf{\Sigma V}^T \mathbf{n}\right)_{<} | \mathbf{\tilde{x}}_T\right)
    = \nabla_{\mathbf{\tilde{x}}_{T,<}} \log p\left(\left(\mathbf{y}_T - \mathbf{\Sigma} \mathbf{\tilde{x}}_T\right)_{<} | \mathbf{\tilde{x}}_T\right).
\]
The vector $\left(\mathbf{U}^{T} \mathbf{z} - \mathbf{\Sigma V}^T \mathbf{n}\right)_{\not{<}}$ was built element-wise as gradual noise additions, independent of $\mathbf{\tilde{x}}_{T,<}$, and thus its derivative is zero.
We obtain a gradient-log of a Gaussian density function of the vector ${\left(\mathbf{y}_T - \mathbf{\Sigma} \mathbf{\tilde{x}}_T\right)_<}$, having a zero mean and a covariance matrix ${\sigma_0^2 \mathbf{I} - \sigma_i^2 \mathbf{\Sigma}_< \mathbf{\Sigma}_<^T}$, according to \autoref{eqn:z_small}. Thus, when deriving it by $\mathbf{\tilde{x}}_{T,<}$, we obtain the known Gaussian gradient-log, multiplied from the left by $-\mathbf{\Sigma}_<^T$, which is the inner derivative of the Gaussian parameter, implying
\begin{align*}
    \nabla_{\mathbf{\tilde{x}}_{T,<}} \log p\left(\mathbf{\tilde{x}}_T | \mathbf{y}_T\right) & =
    - \mathbf{\Sigma}_<^T \left( \sigma_0^2 \mathbf{I} - \sigma_i^2 \mathbf{\Sigma}_< \mathbf{\Sigma}_<^T \right)^{-1} \left( \mathbf{\Sigma} \mathbf{\tilde{x}}_T - \mathbf{y}_T \right)_< + \left( \nabla_{\mathbf{\tilde{x}}_T} \log p\left(\mathbf{\tilde{x}}_T\right) \right)_< \\
& =
    \mathbf{\Sigma}_<^T \left( \sigma_0^2 \mathbf{I} - \sigma_i^2 \mathbf{\Sigma}_< \mathbf{\Sigma}_<^T \right)^{-1} \left( \mathbf{y}_T - \mathbf{\Sigma} \mathbf{\tilde{x}}_T \right)_< + \left( \nabla_{\mathbf{\tilde{x}}_T} \log p\left(\mathbf{\tilde{x}}_T\right) \right)_< \\
& =
    \left( \mathbf{\Sigma}^T \left( \sigma_0^2 \mathbf{I} - \sigma_i^2 \mathbf{\Sigma \Sigma}^T \right)^{-1} \left( \mathbf{y}_T - \mathbf{\Sigma} \mathbf{\tilde{x}}_T \right) \right)_< + \left(\mathbf{V}^T \nabla_{\mathbf{\tilde{x}}} \log p\left(\mathbf{\tilde{x}}\right) \right)_<.
\end{align*}
So, in summary,
\begin{equation}
    \label{eqn:apdx_small}
    \nabla_{\mathbf{\tilde{x}}_{T,<}} \log p\left(\mathbf{\tilde{x}}_T | \mathbf{y}_T\right) =
    \left( \mathbf{\Sigma}^T \left( \sigma_0^2 \mathbf{I} - \sigma_i^2 \mathbf{\Sigma \Sigma}^T \right)^{-1} \left( \mathbf{y}_T - \mathbf{\Sigma} \mathbf{\tilde{x}}_T \right) \right)_< + \left(\mathbf{V}^T \nabla_{\mathbf{\tilde{x}}} \log p\left(\mathbf{\tilde{x}}\right) \right)_<.
\end{equation}

Aggregating all these results together, by combining Equations \ref{eqn:apdx_zero}, \ref{eqn:apdx_big} and \ref{eqn:apdx_small} into one vector, we obtain the following conditional score function approximation:
\begin{equation}
\label{eqn:apdx_grad}
    \nabla_{ \mathbf{\tilde{x}}_T } \log p\left(\mathbf{\tilde{x}}_T | \mathbf{y}_T\right) =
    \mathbf{\Sigma}^T \left| \sigma_0^2 \mathbf{I} - \sigma_i^2 \mathbf{\Sigma \Sigma}^T \right|^{\dagger}
    \left( \mathbf{y}_T - \mathbf{\Sigma}  \mathbf{\tilde{x}}_T \right) +
    \left. \left( \mathbf{V}^T \nabla_{\mathbf{\tilde{x}}} \log p\left(\mathbf{\tilde{x}}\right) \right) \right|_{\not{>}},
\end{equation}
where $\left. \left( \mathbf{v} \right) \right|_{\not{>}}$ is the vector $\mathbf{v}$, but with zeros in its entries that correspond to $\mathbf{v}_{>}$. Observe that the first term in \autoref{eqn:apdx_grad} contains zeros in the entries corresponding to $\mathbf{\tilde{x}}_{T, 0}$, matching the above calculations.

\end{proof}

\section{Step Size Derivation}
\label{sec:proof_step}
As explained in~\cite{simoncelli}, the following equality holds:
\[
    \nabla_{\mathbf{\tilde{x}}} \log p\left(\mathbf{\tilde{x}}\right) = \frac{\mathbf{D}\left(\mathbf{\tilde{x}}, \sigma\right) - \mathbf{\tilde{x}}}{\sigma^2},
\]
where $\mathbf{D}\left(\mathbf{\tilde{x}}, \sigma\right)$ is the theoretical MSE minimizer, $\mathbb{E}\left[\mathbf{x} | \mathbf{\tilde{x}}\right]$. We introduce an assumption that $\mathbf{D}\left(\mathbf{\tilde{x}}, \sigma\right)$ does not significantly change with small perturbations in $\mathbf{\tilde{x}}$, resulting in:
\[
    \frac{\partial}{\partial \mathbf{\tilde{x}}} \mathbf{D}\left(\mathbf{\tilde{x}}, \sigma\right) \approx \mathbf{0}.
\]
This assumption is justified by the fact that with probability $1$, the infinitesimal perturbations are orthogonal to the image manifold around the point $\mathbf{\tilde{x}}$, implying that they can be referred to as an additive white Gaussian noise. Due to the efficiency of the denoiser in wiping such noise, the sensitivity of its output to this extra noise is negligible. 

Our goal in this appendix is to evaluate the Hessian of the log posterior in order to be used for better conditioning of the iterative Langevin steps. Thus, we need to differentiate the gradient that was derived above,
\[
    \nabla_{ \mathbf{\tilde{x}}_T } \log p\left(\mathbf{\tilde{x}}_T | \mathbf{y}_T\right) =
    \mathbf{\Sigma}^T \left| \sigma_0^2 \mathbf{I} - \sigma_i^2 \mathbf{\Sigma \Sigma}^T \right|^{\dagger}
    \left( \mathbf{y}_T - \mathbf{\Sigma}  \mathbf{\tilde{x}}_T \right) +
    \left. \left( \mathbf{V}^T \nabla_{\mathbf{\tilde{x}}} \log p\left(\mathbf{\tilde{x}}\right) \right) \right|_{\not{>}}.
\]

\begin{theorem}
Given $\mathbf{y} = \mathbf{Hx} + \mathbf{z}$, $\mathbf{z} \sim \mathcal{N}\left(0, \sigma_0^2 \mathbf{I}\right)$, $\mathbf{H} = \mathbf{U \Sigma V}^T$ is the SVD decomposition of $\mathbf{H}$, $\mathbf{y}_T = \mathbf{U}^T \mathbf{y}$,
$\mathbf{n} = \mathbf{n}_i$ as constructed in \autoref{sec:problem},
$\mathbf{\tilde{x}} = \mathbf{\tilde{x}}_i = \mathbf{x} + \mathbf{n}$, $\mathbf{\tilde{x}}_T = \mathbf{V}^T \mathbf{\tilde{x}}$, $\mathbf{{x}}_T = \mathbf{V}^T \mathbf{{x}}$, the Hessian of the log posterior can be approximated by a diagonal matrix whose entries are:
\[
    \left[ \nabla_{\mathbf{\tilde{x}}_{T}}^2 \log p\left(\mathbf{\tilde{x}}_T | \mathbf{y}_T\right)\right]_{j,j} = \begin{cases}
    \frac{-1}{\sigma_i^2}
    & s_j = 0 \\
    \frac{- s_j^2}{s_j^2 \sigma_i^2 - \sigma_0^2}
    & \sigma_{i} s_j > \sigma_0\\
    \frac{- s_j^2}{\sigma_0^2 - s_j^2 \sigma_i^2} - \frac{1}{\sigma_i^2}
    & 0 < \sigma_{i} s_j < \sigma_0.
    \end{cases}
\]
\end{theorem}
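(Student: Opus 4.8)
The plan is to directly differentiate the conditional score function from the previous theorem with respect to $\mathbf{\tilde{x}}_T$, treating its two summands separately and exploiting the fact that every matrix appearing in the measurement term is (rectangular) diagonal in the SVD domain, so that the resulting Hessian is automatically diagonal.

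First I would handle the measurement-dependent term $\mathbf{\Sigma}^T \left| \sigma_0^2 \mathbf{I} - \sigma_i^2 \mathbf{\Sigma\Sigma}^T \right|^\dagger \left( \mathbf{y}_T - \mathbf{\Sigma}\mathbf{\tilde{x}}_T \right)$. Since $\mathbf{y}_T$ and the matrix factors do not depend on $\mathbf{\tilde{x}}_T$, the only dependence is the linear one through $-\mathbf{\Sigma}\mathbf{\tilde{x}}_T$, so the Jacobian of this term is $-\mathbf{\Sigma}^T \left| \sigma_0^2 \mathbf{I} - \sigma_i^2 \mathbf{\Sigma\Sigma}^T \right|^\dagger \mathbf{\Sigma}$. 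Because $\mathbf{\Sigma}$ is rectangular diagonal and $\mathbf{\Sigma\Sigma}^T$ is diagonal, this product is a diagonal matrix whose $(j,j)$ entry equals $-s_j^2 / \left| \sigma_0^2 - \sigma_i^2 s_j^2 \right|$. Splitting on the sign of $\sigma_0^2 - \sigma_i^2 s_j^2$ then gives $-s_j^2/(s_j^2 \sigma_i^2 - \sigma_0^2)$ on the $>$ entries and $-s_j^2/(\sigma_0^2 - s_j^2 \sigma_i^2)$ on the $<$ entries, while the $s_j = 0$ entries vanish thanks to the leading $\mathbf{\Sigma}^T$.

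Next I would differentiate the prior term $\left. \left( \mathbf{V}^T \nabla_{\mathbf{\tilde{x}}} \log p\left(\mathbf{\tilde{x}}\right) \right) \right|_{\not{>}}$, writing $\mathbf{\tilde{x}} = \mathbf{V}\mathbf{\tilde{x}}_T$ and applying the multivariate chain rule, so that its Jacobian becomes $\mathbf{P}\, \mathbf{V}^T \left( \nabla^2_{\mathbf{\tilde{x}}} \log p\left(\mathbf{\tilde{x}}\right) \right) \mathbf{V}$, where $\mathbf{P}$ is the diagonal projection that zeros the $>$ entries implied by $|_{\not{>}}$. Here I invoke the approximation already introduced for the step-size derivation: using Miyasawa's identity $\nabla_{\mathbf{\tilde{x}}} \log p\left(\mathbf{\tilde{x}}\right) = \left( \mathbf{D}\left(\mathbf{\tilde{x}}, \sigma_i\right) - \mathbf{\tilde{x}} \right)/\sigma_i^2$ together with $\partial \mathbf{D}/\partial \mathbf{\tilde{x}} \approx \mathbf{0}$, the prior Hessian collapses to $-\mathbf{I}/\sigma_i^2$. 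Then $\mathbf{V}^T \left( -\mathbf{I}/\sigma_i^2 \right) \mathbf{V} = -\mathbf{I}/\sigma_i^2$, so after the projection this term contributes exactly $-1/\sigma_i^2$ on every $\not{>}$ entry (that is, the $s_j = 0$ and $<$ entries) and nothing on the $>$ entries.

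Finally I would add the two diagonal contributions entrywise and read off the three cases: on $s_j = 0$ entries only the prior term survives, giving $-1/\sigma_i^2$; on $>$ entries only the measurement term survives, giving $-s_j^2/(s_j^2 \sigma_i^2 - \sigma_0^2)$; and on $<$ entries both survive, giving $-s_j^2/(\sigma_0^2 - s_j^2 \sigma_i^2) - 1/\sigma_i^2$. The main obstacle is not the algebra, which is routine once everything is diagonalized, but justifying the diagonal-Hessian approximation $\partial \mathbf{D}/\partial \mathbf{\tilde{x}} \approx \mathbf{0}$: this single assumption is what simultaneously annihilates the off-diagonal coupling that $\mathbf{V}$ would otherwise introduce and reduces the prior Hessian to a scalar multiple of the identity. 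It must be argued, as in the step-size appendix, through the orthogonality of the infinitesimal perturbations to the image manifold and the denoiser's near-total insensitivity to such residual noise.
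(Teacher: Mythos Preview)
Your proposal is correct and matches the paper's proof in all essentials: both differentiate the conditional score expression from the previous theorem, handle the measurement term as a linear map (yielding $-\mathbf{\Sigma}^T\left|\sigma_0^2\mathbf{I}-\sigma_i^2\mathbf{\Sigma\Sigma}^T\right|^\dagger\mathbf{\Sigma}$), and reduce the prior term via Miyasawa's identity and the approximation $\partial\mathbf{D}/\partial\mathbf{\tilde{x}}\approx\mathbf{0}$, justified exactly as you describe. The only cosmetic difference is that the paper organizes the computation case-by-case ($0$, $>$, $<$) and treats the two summands inside each case, whereas you differentiate each summand globally and then read off the three cases; the underlying steps and the single approximation are identical.
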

\begin{proof}
Again, we split our calculation into 3 cases:

\textbf{For the case of $\mathbf{\tilde{x}}_{T,0}$}, we notice that the first term in the gradient is zero  due to the multiplication by $\mathbf{\Sigma}^T$, and thus we calculate:
\[
    \nabla_{\mathbf{\tilde{x}}_{T,0}}^2 \log p\left(\mathbf{\tilde{x}}_T | \mathbf{y}_T\right) =
    \frac{\partial}{\partial \mathbf{\tilde{x}}_{T,0}}
    \left. \left( \mathbf{V}^T \nabla_{\mathbf{\tilde{x}}} \log p\left(\mathbf{\tilde{x}}\right) \right) \right|_{\not{>}}.
\]
We use the chain rule and obtain
\begin{align*}
    \nabla_{\mathbf{\tilde{x}}_{T,0}}^2 \log p\left(\mathbf{\tilde{x}}_T | \mathbf{y}_T\right) & =
    \mathbf{V}_0 \frac{\partial}{\partial \mathbf{\tilde{x}}_0}
    \left. \left( \mathbf{V}^T \nabla_{\mathbf{\tilde{x}}} \log p\left(\mathbf{\tilde{x}}\right) \right) \right|_{\not{>}} \\
& =
    \mathbf{V}_0 \mathbf{V}_0^T \frac{\partial}{\partial \mathbf{\tilde{x}}_0}
    \left. \left( \nabla_{\mathbf{\tilde{x}}} \log p\left(\mathbf{\tilde{x}}\right) \right) \right|_{\not{>}} \\
& =
    \frac{\partial}{\partial \mathbf{\tilde{x}}_0}
    \left. \left( \nabla_{\mathbf{\tilde{x}}} \log p\left(\mathbf{\tilde{x}}\right) \right) \right|_{\not{>}} \\
& = 
    \frac{\partial}{\partial \mathbf{\tilde{x}}_0}
    \left. \left( \frac{\mathbf{D}\left(\mathbf{\tilde{x}}, \sigma\right) - \mathbf{\tilde{x}}}{\sigma_i^2} \right) \right|_{\not{>}} \\
& =
    \frac{\partial}{\partial \mathbf{\tilde{x}}_0}
    \left. \left( \frac{ - \mathbf{\tilde{x}}}{\sigma_i^2} \right) \right|_{\not{>}} \\
& =
    \frac{-1}{\sigma_i^2} \mathbf{I},
\end{align*}
where we have invoked our earlier assumption on the denoiser's sensitivity to perturbations. This leads to the conclusion
\begin{equation}
    \label{eqn:stp_zero}
    \nabla_{\mathbf{\tilde{x}}_{T,0}}^2 \log p\left(\mathbf{\tilde{x}}_T | \mathbf{y}_T\right)
    = \frac{-1}{\sigma_i^2} \mathbf{I}.
\end{equation}

\textbf{For the case of $\mathbf{\tilde{x}}_{T,>}$}, we calculate:
\[
    \nabla_{\mathbf{\tilde{x}}_{T,>}}^2 \log p\left(\mathbf{\tilde{x}}_T | \mathbf{y}_T\right) =
    \frac{\partial}{\partial \mathbf{\tilde{x}}_{T,>}} \left[
    \mathbf{\Sigma}^T \left| \sigma_0^2 \mathbf{I} - \sigma_i^2 \mathbf{\Sigma \Sigma}^T \right|^{\dagger}
    \left( \mathbf{y}_T - \mathbf{\Sigma}  \mathbf{\tilde{x}}_T \right) +
    \left. \left( \mathbf{V}^T \nabla_{\mathbf{\tilde{x}}} \log p\left(\mathbf{\tilde{x}}\right) \right) \right|_{\not{>}}
    \right].
\]
The first term's derivative is simply the matrix that multiplies the vector $\mathbf{\tilde{x}}_{T,>}$, while the second term can be approximated, with the use of the chain rule, as follows:
\begin{align*}
    \frac{\partial}{\partial \mathbf{\tilde{x}}_{T,>}}  \left. \left( \mathbf{V}^T \nabla_{\mathbf{\tilde{x}}} \log p\left(\mathbf{\tilde{x}}\right) \right) \right|_{\not{>}}
    & =
    \mathbf{V}_> \frac{\partial}{\partial \mathbf{\tilde{x}}_>}
    \left. \left( \mathbf{V}^T \nabla_{\mathbf{\tilde{x}}} \log p\left(\mathbf{\tilde{x}}\right) \right) \right|_{\not{>}} \\
& =
    \mathbf{V}_> \mathbf{V}_>^T \frac{\partial}{\partial \mathbf{\tilde{x}}_>}
    \left. \left( \nabla_{\mathbf{\tilde{x}}} \log p\left(\mathbf{\tilde{x}}\right) \right) \right|_{\not{>}} \\
& =
    \frac{\partial}{\partial \mathbf{\tilde{x}}_>}
    \left. \left( \nabla_{\mathbf{\tilde{x}}} \log p\left(\mathbf{\tilde{x}}\right) \right) \right|_{\not{>}} \\
& = 
    \frac{\partial}{\partial \mathbf{\tilde{x}}_>}
    \left. \left( \frac{\mathbf{D}\left(\mathbf{\tilde{x}}, \sigma\right) - \mathbf{\tilde{x}}}{\sigma_i^2} \right) \right|_{\not{>}} \\
& =
    \frac{\partial}{\partial \mathbf{\tilde{x}}_>}
    \left. \left( \frac{ - \mathbf{\tilde{x}}}{\sigma_i^2} \right) \right|_{\not{>}} \\
& =
    \mathbf{0},
\end{align*}
where we have invoked our earlier assumption on the denoiser's sensitivity to perturbations, resulting in
\begin{equation}
    \label{eqn:stp_big}
    \nabla_{\mathbf{\tilde{x}}_{T,>}}^2 \log p\left(\mathbf{\tilde{x}}_T | \mathbf{y}_T\right) =
    \left( - \mathbf{\Sigma}^{T} \left| \sigma_0^2 \mathbf{I} - \sigma_i^2 \mathbf{\Sigma \Sigma}^T \right|^{\dagger} \mathbf{\Sigma} \right)_{>}.
\end{equation}

\textbf{For the case of $\mathbf{\tilde{x}}_{T,<}$}, we calculate:
\[
    \nabla_{\mathbf{\tilde{x}}_{T,<}}^2 \log p\left(\mathbf{\tilde{x}}_T | \mathbf{y}_T\right) =
    \frac{\partial}{\partial \mathbf{\tilde{x}}_{T,<}} \left[
    \mathbf{\Sigma}^T \left| \sigma_0^2 \mathbf{I} - \sigma_i^2 \mathbf{\Sigma \Sigma}^T \right|^{\dagger}
    \left( \mathbf{y}_T - \mathbf{\Sigma}  \mathbf{\tilde{x}}_T \right) +
    \left. \left( \mathbf{V}^T \nabla_{\mathbf{\tilde{x}}} \log p\left(\mathbf{\tilde{x}}\right) \right) \right|_{\not{>}}
    \right].
\]
The first term's derivative can be calculated similarly to the previous case, and the second term can be approximately derived as in the first case, resulting in
\begin{equation}
    \label{eqn:stp_small}
    \nabla_{\mathbf{\tilde{x}}_{T,<}}^2 \log p\left(\mathbf{\tilde{x}}_T | \mathbf{y}_T\right) =
    \left( - \mathbf{\Sigma}^{T} \left| \sigma_0^2 \mathbf{I} - \sigma_i^2 \mathbf{\Sigma \Sigma}^T \right|^{\dagger} \mathbf{\Sigma} \right)_< +
    \frac{-1}{\sigma_i^2} \mathbf{I}.
\end{equation}

Aggregating all these results together, by combining Equations \ref{eqn:stp_zero}, \ref{eqn:stp_big} and \ref{eqn:stp_small} into one diagonal matrix, we obtain the following diagonal entries of the Hessian:
\begin{equation}
    \left[ \nabla_{\mathbf{\tilde{x}}_{T}}^2 \log p\left(\mathbf{\tilde{x}}_T | \mathbf{y}_T\right)\right]_{j,j} = \begin{cases}
    \frac{-1}{\sigma_i^2}
    & s_j = 0 \\
    \frac{- s_j^2}{s_j^2 \sigma_i^2 - \sigma_0^2}
    & \sigma_{i} s_j > \sigma_0 \\
    \frac{- s_j^2}{\sigma_0^2 - s_j^2 \sigma_i^2} - \frac{1}{\sigma_i^2}
    & 0 < \sigma_{i} s_j < \sigma_0.
    \end{cases}
\end{equation}
\end{proof}

Finally, since the approximation of the Hessian is a diagonal matrix and its diagonal entries are non-zeros, we can easily invert it. This results in the following term for each of the diagonal entries of the negative inverse Hessian, which we denote $\mathbf{A}_i$:
\[
    \left(\mathbf{A}_i\right)_{j,j} = \begin{cases}
    \sigma_i^2 & s_j = 0 \\
    \sigma_i^2 - \frac{\sigma_0^2}{s_j^2} & \sigma_{i} s_j > \sigma_0 \\
    \sigma_i^2 \cdot \left( 1 - s_j^2 \frac{\sigma_i^2}{\sigma_0^2} \right) & 0 < \sigma_{i} s_j < \sigma_0.
    \end{cases}
\]

\begin{figure}
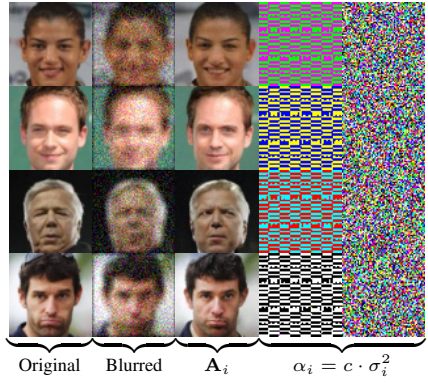

    \centering
    \def\arraystretch{0.1}
    \setlength\tabcolsep{0pt}
    \begin{tabular}{ccccc}
        \forloop{row}{0}{\value{row} < 4}{
            \hspace{-0.05cm}\includegraphics[width=1.1cm,height=1.1cm]{./images/ablation/sample_\arabic{row}_column_0.png} \hspace{-0.17cm}
            \forloop{col}{1}{\value{col} < 5}{
                & \includegraphics[width=1.1cm,height=1.1cm]{./images/ablation/sample_\arabic{row}_column_\arabic{col}.png} \hspace{-0.17cm}
            } \\
        }
        & \multicolumn{4}{c}{\vspace{0.5mm}}\\
        \upbracefill & \upbracefill &
        \multicolumn{1}{c}{
            \upbracefill
        } &
        \multicolumn{2}{c}{
            \upbracefill
        }
        \\
        & \multicolumn{4}{c}{\vspace{0.5mm}}\\
        \scriptsize{Original} & \scriptsize{Blurred} &
        \multicolumn{1}{c}{
            \scriptsize{$\mathbf{A}_i$}
        } &
        \multicolumn{2}{c}{
            \scriptsize{$\alpha_i = c \cdot \sigma_i^2$}
        }
    \end{tabular}
    \caption{Comparison of different step sizes, while the rest of the hyperparameters are fixed (uniform $5 \times 5$ blur and an additive noise with $\sigma_0 = 0.1$). The third column refers to the diagonal step size matrix $\mathbf{A}_i$, as used in SNIPS. The last two columns refer to a uniform time-dependent step size $\alpha_i = c \cdot \sigma_i^2$, with $c = 1e-3, 1e-5$, respectively. Different choices of $c$ yielded similar results.}
    \label{fig:ablation}
\end{figure}

In order to demonstrate the effectiveness of this position-dependent step size vector, we compare it to a uniform step size $\alpha_i \propto \sigma_i^2$ for image deblurring. As can be seen in \autoref{fig:ablation}, the latter diverges under the same hyperparameters. It is possible that for a large enough number of iterations, a uniform step size might converge and produce viable results. However, we find little value in demonstrating this, as it requires retraining the NCSNv2 model for more noise levels, and it slows down the algorithm.

\section{Alternative Definition of the Noise}
\label{sec:plus_appendix}
In our derivations in \autoref{sec:problem} we argued that for the analysis to go through, we should tie the synthetic annealed Langevin noise to the measurements one. As can be seen in \autoref{sec:proofs}, this choice clearly complicates the derivation of the conditional score, raising the question whether a simple independence between these two random vectors could have been used instead. In this appendix we explore this option and expose its limitation.

We start by defining $\mathbf{\tilde{x}}_{L+1} = \mathbf{x}$, and for every ${i = L, L-1, \dots, 1}$: ${\mathbf{\tilde{x}}_{i} = \mathbf{\tilde{x}}_{i+1} + \text{\boldmath$\eta$}_i}$, where ${\text{\boldmath$\eta$}_i \sim \mathcal{N}\left(0, \left( \sigma_i^2 - \sigma_{i+1}^2 \right) \mathbf{I}\right)}$ is independent of $\mathbf{z}$.
This results in $\mathbf{\tilde{x}}_{i} = \mathbf{x} + \mathbf{n}_i$, where ${\mathbf{n}_i = \sum_{k=i}^{L} \text{\boldmath$\eta$}_k \sim \mathcal{N}\left(0, \sigma_i^2 \mathbf{I}\right)}$.
As before, we aim to derive the conditional score function $p\left(\mathbf{\tilde{x}}_i|\mathbf{y}\right)$ and thus we look at the vector
\begin{equation}
\label{eqn:apdx_alt}
    \mathbf{y} - \mathbf{H\tilde{x}}_i = \mathbf{Hx} + \mathbf{z} - \mathbf{Hx} - \mathbf{Hn}_i = \mathbf{z} - \mathbf{Hn}_{i}.
\end{equation}
This is a Gaussian vector with zero mean and a covariance matrix $\sigma_0^2 \mathbf{I} + \sigma_i^2 \mathbf{H H}^T$, due to the independence between $\mathbf{z}$ and $\mathbf{n}$.
In order to make use of \autoref{eqn:apdx_alt}, we would like to express $p\left(\mathbf{\tilde{x}}_i | \mathbf{y}\right)$ as $p\left(\mathbf{H\tilde{x}}_i - \mathbf{y} | \mathbf{y}\right)$. However, this transition is not possible because the multiplication by $\mathbf{H}$ is not an invertible operation, which means that it changes the statistics of the tested vector.
Instead, $p\left(\mathbf{\tilde{x}}_i | \mathbf{y}\right)$ may be expressed using the Bayes rule as
\[
    p\left(\mathbf{\tilde{x}}_i | \mathbf{y}\right) = \frac{1}{p\left(\mathbf{y}\right)} p\left(\mathbf{\tilde{x}}_i\right) p\left(\mathbf{y} | \mathbf{\tilde{x}}_i\right) = \frac{1}{p\left(\mathbf{y}\right)} p\left(\mathbf{\tilde{x}}_i\right) p\left(\mathbf{y} - \mathbf{H\tilde{x}}_i | \mathbf{\tilde{x}}_i\right).
\]
The first term $1 / p\left(\mathbf{y}\right)$ becomes zero after differentiating by $\mathbf{\tilde{x}}_i$, and the second term's gradient log can be approximated using a neural network, as done before.
The third term describes a Gaussian vector, and can be written as $p\left( \mathbf{z} - \mathbf{Hn}_{i} | \mathbf{\tilde{x}}_i\right)$ due to \autoref{eqn:apdx_alt}. The Gaussian vector $\mathbf{z} - \mathbf{Hn}_{i}$ is conditioned on $\mathbf{\tilde{x}}_i = \mathbf{x}_i + \mathbf{n}_i$, which encapsulates information about $\mathbf{n}_i$, without a clear way of knowing $\mathbf{n}_i$ itself. Thus, without an explicit term for $p\left(\mathbf{n}_i | \mathbf{\tilde{x}}_i\right)$, we are unable to derive an analytical term for the gradient log of the likelihood.

Therefore, the path we took to define the noise additions aims for the difference $\mathbf{y} - \mathbf{H\tilde{x}}_i$ to be independent of $\mathbf{\tilde{x}}_i$. In order to achieve that, we use the SVD decomposition of $\mathbf{H}$ and define the noise addition sequence as in~\autoref{sec:problem}, both steps seem unavoidable.
\section{Implementation Details}
We run SNIPS with the hyperparameters detailed in \autoref{tab:params}, where $\left\{\sigma_{i}\right\}_{i=1}^{L}$ is a decreasing geometric sequence.
These hyperparameters conform to those used in NCSNv2~\cite{song2020improved}, the neural network model that we used.
The parameters $\mathbf{H}$, $\sigma_0$ and $\mathbf{y}$ are defined by the inverse problem at hand.
Recall that this algorithm applies $\tau L$ overall iterations to complete, in each a denoiser is being activated.
The sampling algorithm was run on a single Nvidia RTX3080 GPU with 10GB memory, and took around $2$ minutes for producing $8$ samples from the $64 \times 64$ CelebA dataset, and around $6$ minutes for producing $6$ samples from the $128 \times 128$ LSUN dataset. The exact times vary slightly for the various inverse problems.

The code used in this paper is available at \url{https://github.com/bahjat-kawar/snips_torch}.

\begin{table}
    \centering
    \caption{Hyperparameters for our experiments. ${\frac{\sigma_{i+1}}{\sigma_{i}}}$ is the geometric common ratio for $\left\{\sigma_{i}\right\}_{i=1}^{L}$.
    }
    \label{tab:params}
    \begin{tabular}{c c c c c c c}
        \toprule
        \textbf{Dataset} & ${c}$ & ${\tau}$ & ${L}$ & ${\sigma_{1}}$ & ${\sigma_{L}}$ & ${\frac{\sigma_{i+1}}{\sigma_{i}}}$ \\
        \midrule
        \textbf{CelebA}
        &$3.3e-2$ & ${5}$ & $500$ & ${90}$ & $0.01$ & $0.982$\\
        \textbf{LSUN}
        &$1.8e-2$ & ${3}$ & $1086$ & ${190}$ & $0.01$ & $0.991$\\
        \bottomrule
    \end{tabular}
\end{table}

\section{Comparison to RED}
\label{sec:red}
\begin{table}
    \centering
    \caption{Comparison between SNIPS and RED on 8 CelebA images. SNIPS Mean is the average of 8 SNIPS outputs per image. The best number in each row is in \textbf{bold}.}
    \label{tab:red}
    \begin{tabular}{c c c c c}
        \toprule
        \textbf{Problem} & \textbf{Metric} & \textbf{SNIPS} & \textbf{SNIPS Mean} & \textbf{RED} \\
        \midrule
        \multirow{2}{*}{ {Uniform deblurring with $\sigma_0 = 0.006$}}
        & {PSNR $\uparrow$} & $32.41$ & $\mathbf{35.42}$ & $29.03$ \\
        & {LPIPS $\downarrow$} & $\mathbf{0.005}$ & $\mathbf{0.005}$ & $0.043$ \\
        \multirow{2}{*}{ {Uniform deblurring with $\sigma_0 = 0.1$}}
        & {PSNR $\uparrow$} & $25.03$ & $\mathbf{27.28}$ & $20.10$ \\
        & {LPIPS $\downarrow$} & $\mathbf{0.032}$ & $0.045$ & $0.077$ \\
        \bottomrule
    \end{tabular}
\end{table}

RED~\cite{romano2017red} is a well-known method that leverages a denoiser for the MAP solution of inverse problems, and as such it is a relevant method to compare with. We compare RED to SNIPS on the image deblurring problem (with a uniform $5 \times 5$ kernel and additive noise with $\sigma_0$), while using the same denoiser model (NCSNv2) for both. We run the SD (Steepest Descent) version of RED on the luminance channel of the image in the YCbCr color space, as in the original paper, with its hyperparameters chosen for best PSNR performance. Namely, $\lambda = 0.12, N = 100$ for $\sigma_0 = 0.006$, and $\lambda = 1000, N = 100$ for $\sigma_0 = 0.1$. In addition to PSNR, we also calculate LPIPS~\cite{lpips}, a perceptual quality metric, in order to verify the claim that SNIPS has superior visual quality.

As can be seen in \autoref{tab:red}, both SNIPS and its mean outperform RED in PSNR as well as LPIPS. When the noise is significant ($\sigma_0 = 0.1$), it becomes clear that SNIPS has superior visual quality at the expense of PSNR performance, in comparison to the average of samples. A visual comparison is shown in \autoref{fig:red}.

\begin{figure}
    \centering
    \def\arraystretch{0.6}
    \setlength\tabcolsep{1pt}
    \begin{tabular}{ccccc}
        \includegraphics[width=2cm,height=2cm]{./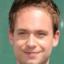}
        & \includegraphics[width=2cm,height=2cm]{./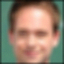}
        & \includegraphics[width=2cm,height=2cm]{./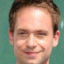}
        & \includegraphics[width=2cm,height=2cm]{./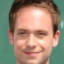}
        & \includegraphics[width=2cm,height=2cm]{./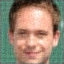}
        \\
        \includegraphics[width=2cm,height=2cm]{./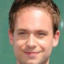}
        & \includegraphics[width=2cm,height=2cm]{./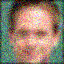}
        & \includegraphics[width=2cm,height=2cm]{./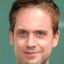}
        & \includegraphics[width=2cm,height=2cm]{./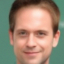}
        & \includegraphics[width=2cm,height=2cm]{./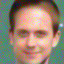}
        \\
        \scriptsize{Original} & \scriptsize{Blurred} & \scriptsize{SNIPS} & \scriptsize{SNIPS Mean} & \scriptsize{RED}
    \end{tabular}
    \caption{Deblurring results on a CelebA image (uniform $5 \times 5$ blur). Top: additive noise with $\sigma_0=0.006$, bottom: additive noise with $\sigma_0=0.1$.}
    \label{fig:red}
\end{figure}

\section{Additional Results}
We provide below more results of SNIPS for image deblurring, super-resolution and compressive sampling. We recommend to view these figures zoomed-in in order to see the details in the produced samples (or lack thereof in their average).

\begin{figure}[h]
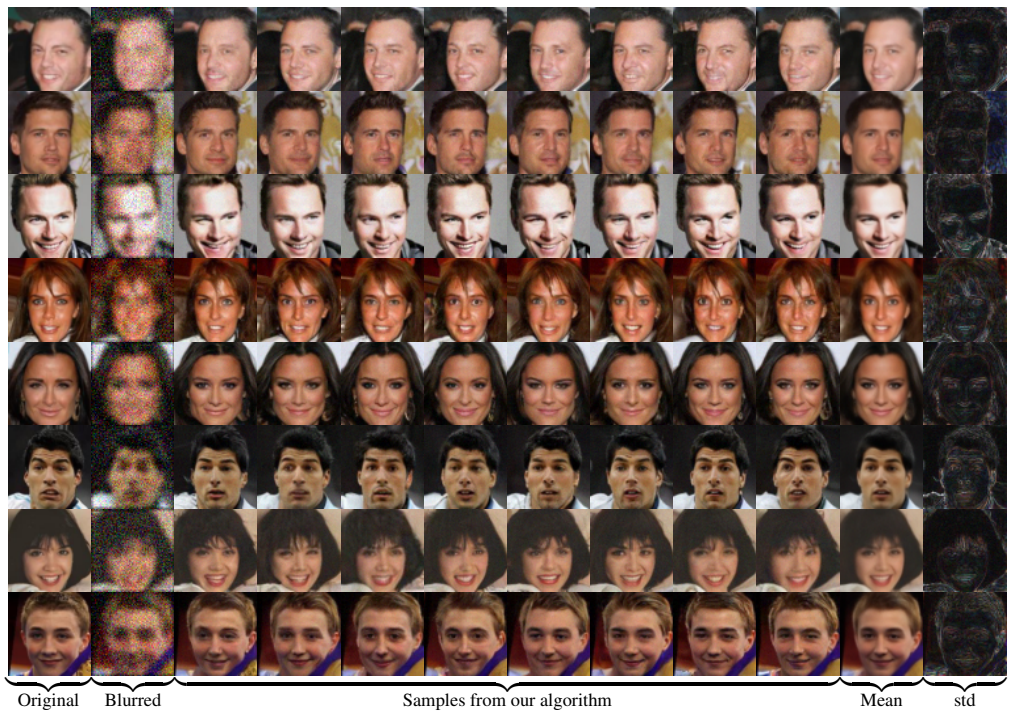

    \centering
    \def\arraystretch{0.1}
    \setlength\tabcolsep{0pt}
    \begin{tabular}{cccccccccccc}
        \forloop{row}{0}{\value{row} < 8}{
            \hspace{-0.05cm}\includegraphics[width=1.1cm,height=1.1cm]{./images/unideblur/sample_\arabic{row}_column_0.png} \hspace{-0.17cm}
            \forloop{col}{1}{\value{col} < 12}{
                & \includegraphics[width=1.1cm,height=1.1cm]{./images/unideblur/sample_\arabic{row}_column_\arabic{col}.png} \hspace{-0.17cm}
            } \\
        }
        & \multicolumn{11}{c}{\vspace{0.5mm}}\\
        \upbracefill & \upbracefill &
        \multicolumn{8}{c}{
            \upbracefill
        } &
        \upbracefill & \upbracefill
        \\
        & \multicolumn{11}{c}{\vspace{0.5mm}}\\
        \scriptsize{Original} & \scriptsize{Blurred} &
        \multicolumn{8}{c}{
            \scriptsize{Samples from our algorithm}
        } &
        \scriptsize{Mean} & \scriptsize{std}
    \end{tabular}
    \caption{Deblurring results on CelebA images (uniform $5 \times 5$ blur and an additive noise with $\sigma_0 = 0.1$).}
\end{figure}

\newcounter{main}
\begin{figure}
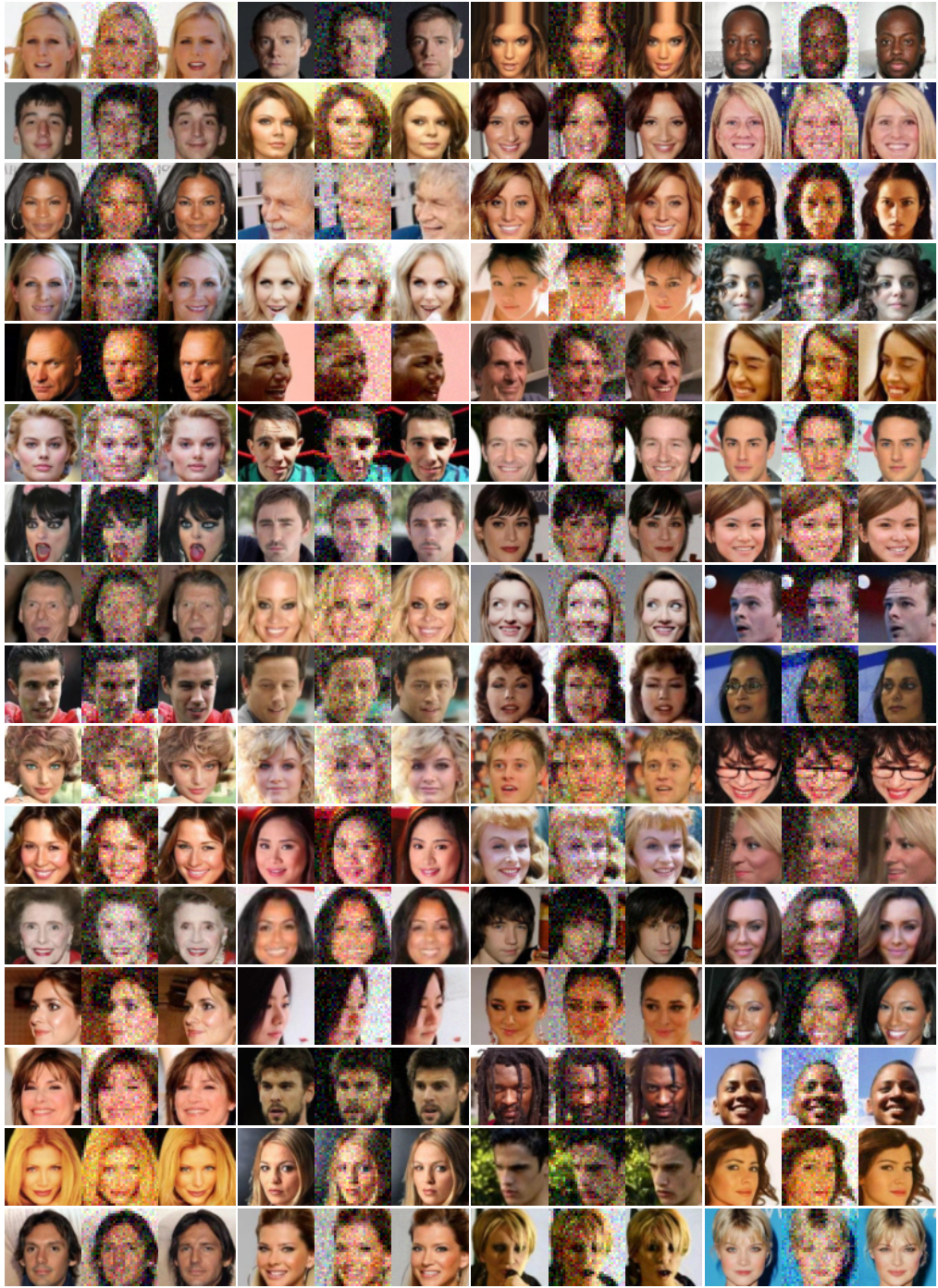

    \centering
    \def\arraystretch{0.5}
    \setlength\tabcolsep{0pt}
    \begin{tabular}{cccc}
    \begin{tabular}{ccc}
        \forloop{row}{0}{\value{row} < 16}{
            \hspace{-0.05cm}\includegraphics[width=1.1cm,height=1.1cm]{./images/extra_sr2/sample_\arabic{row}_column_0.png} \hspace{-0.17cm}
            \forloop{col}{1}{\value{col} < 3}{
                & \includegraphics[width=1.1cm,height=1.1cm]{./images/extra_sr2/sample_\arabic{row}_column_\arabic{col}.png} \hspace{-0.17cm}
            } \\
        }
    \end{tabular}
    &
    \begin{tabular}{ccc}
        \forloop{row}{16}{\value{row} < 32}{
            \hspace{-0.05cm}\includegraphics[width=1.1cm,height=1.1cm]{./images/extra_sr2/sample_\arabic{row}_column_0.png} \hspace{-0.17cm}
            \forloop{col}{1}{\value{col} < 3}{
                & \includegraphics[width=1.1cm,height=1.1cm]{./images/extra_sr2/sample_\arabic{row}_column_\arabic{col}.png} \hspace{-0.17cm}
            } \\
        }
    \end{tabular}
    &
    \begin{tabular}{ccc}
        \forloop{row}{32}{\value{row} < 48}{
            \hspace{-0.05cm}\includegraphics[width=1.1cm,height=1.1cm]{./images/extra_sr2/sample_\arabic{row}_column_0.png} \hspace{-0.17cm}
            \forloop{col}{1}{\value{col} < 3}{
                & \includegraphics[width=1.1cm,height=1.1cm]{./images/extra_sr2/sample_\arabic{row}_column_\arabic{col}.png} \hspace{-0.17cm}
            } \\
        }
    \end{tabular}
    &
    \begin{tabular}{ccc}
        \forloop{row}{48}{\value{row} < 64}{
            \hspace{-0.05cm}\includegraphics[width=1.1cm,height=1.1cm]{./images/extra_sr2/sample_\arabic{row}_column_0.png} \hspace{-0.17cm}
            \forloop{col}{1}{\value{col} < 3}{
                & \includegraphics[width=1.1cm,height=1.1cm]{./images/extra_sr2/sample_\arabic{row}_column_\arabic{col}.png} \hspace{-0.17cm}
            } \\
        }
    \end{tabular}
    \end{tabular}
    \caption{Extended uncurated super resolution results on CelebA images (downscaling $2:1$ by plain averaging and adding noise with $\sigma_0 = 0.1$). Every image set contains: original, low-res, SNIPS restoration, in that order.}
\end{figure}

\begin{figure}
    \centering
    \def\arraystretch{0.1}
    \setlength\tabcolsep{0pt}
    \begin{tabular}{cccccccccccc}
        \forloop{row}{0}{\value{row} < 8}{
            \hspace{-0.05cm}\includegraphics[width=1.1cm,height=1.1cm]{./images/sr2/sample_\arabic{row}_column_0.png} \hspace{-0.17cm}
            \forloop{col}{1}{\value{col} < 12}{
                & \includegraphics[width=1.1cm,height=1.1cm]{./images/sr2/sample_\arabic{row}_column_\arabic{col}.png} \hspace{-0.17cm}
            } \\
        }
        & \multicolumn{11}{c}{\vspace{0.5mm}}\\
        \upbracefill & \upbracefill &
        \multicolumn{8}{c}{
            \upbracefill
        } &
        \upbracefill & \upbracefill
        \\
        & \multicolumn{11}{c}{\vspace{0.5mm}}\\
        \scriptsize{Original} & \scriptsize{Low-res} &
        \multicolumn{8}{c}{
            \scriptsize{Samples from our algorithm}
        } &
        \scriptsize{Mean} & \scriptsize{std}
    \end{tabular}
    \caption{Super resolution results on CelebA images (downscaling $2:1$ by plain averaging and adding noise with $\sigma_0 = 0.1$).}
\end{figure}

\begin{figure}
    \centering
    \def\arraystretch{0.1}
    \setlength\tabcolsep{0pt}
    \begin{tabular}{cccccccccccc}
        \forloop{row}{0}{\value{row} < 8}{
            \hspace{-0.05cm}\includegraphics[width=1.1cm,height=1.1cm]{./images/sr4/sample_\arabic{row}_column_0.png} \hspace{-0.17cm}
            \forloop{col}{1}{\value{col} < 12}{
                & \includegraphics[width=1.1cm,height=1.1cm]{./images/sr4/sample_\arabic{row}_column_\arabic{col}.png} \hspace{-0.17cm}
            } \\
        }
        & \multicolumn{11}{c}{\vspace{0.5mm}}\\
        \upbracefill & \upbracefill &
        \multicolumn{8}{c}{
            \upbracefill
        } &
        \upbracefill & \upbracefill
        \\
        & \multicolumn{11}{c}{\vspace{0.5mm}}\\
        \scriptsize{Original} & \scriptsize{Low-res} &
        \multicolumn{8}{c}{
            \scriptsize{Samples from our algorithm}
        } &
        \scriptsize{Mean} & \scriptsize{std}
    \end{tabular}
    \caption{Super resolution results on CelebA images (downscaling $4:1$ by plain averaging and adding noise with $\sigma_0 = 0.1$).}
\end{figure}

\begin{figure}
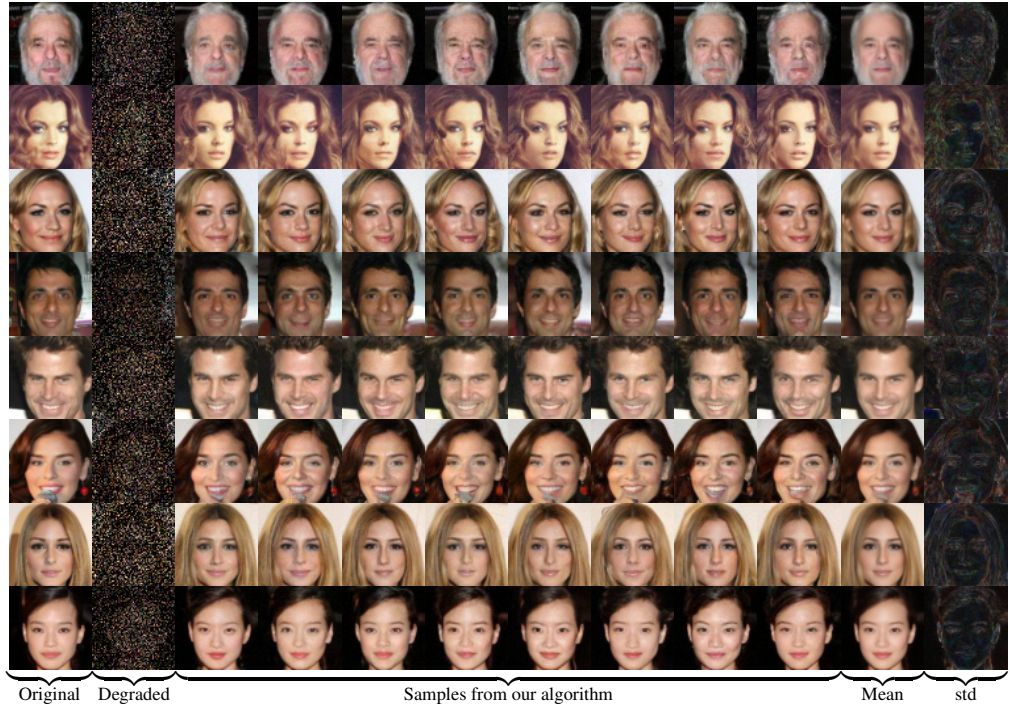

    \centering
    \def\arraystretch{0.1}
    \setlength\tabcolsep{0pt}
    \begin{tabular}{cccccccccccc}
        \forloop{row}{0}{\value{row} < 8}{
            \hspace{-0.05cm}\includegraphics[width=1.1cm,height=1.1cm]{./images/cs4/sample_\arabic{row}_column_0.png} \hspace{-0.17cm}
            \forloop{col}{1}{\value{col} < 12}{
                & \includegraphics[width=1.1cm,height=1.1cm]{./images/cs4/sample_\arabic{row}_column_\arabic{col}.png} \hspace{-0.17cm}
            } \\
        }
        & \multicolumn{11}{c}{\vspace{0.5mm}}\\
        \upbracefill & \upbracefill &
        \multicolumn{8}{c}{
            \upbracefill
        } &
        \upbracefill & \upbracefill
        \\
        & \multicolumn{11}{c}{\vspace{0.5mm}}\\
        \scriptsize{Original} & \scriptsize{Degraded} &
        \multicolumn{8}{c}{
            \scriptsize{Samples from our algorithm}
        } &
        \scriptsize{Mean} & \scriptsize{std}
    \end{tabular}
    \caption{Compressive sensing results on CelebA images (compression by $25\%$ and adding noise with $\sigma_0 = 0.1$).}
\end{figure}

\begin{figure}
    \centering
    \def\arraystretch{0.1}
    \setlength\tabcolsep{0pt}
    \begin{tabular}{cccccccc}
        \forloop{row}{0}{\value{row} < 6}{
            \hspace{-0.05cm}\includegraphics[width=1.7cm,height=1.7cm]{./images/bedroom_sr2/sample_\arabic{row}_column_0.png} \hspace{-0.17cm}
            \forloop{col}{1}{\value{col} < 8}{
                & \includegraphics[width=1.7cm,height=1.7cm]{./images/bedroom_sr2/sample_\arabic{row}_column_\arabic{col}.png} \hspace{-0.17cm}
            } \\
        }
        & \multicolumn{7}{c}{\vspace{0.5mm}}\\
        \upbracefill & \upbracefill &
        \multicolumn{4}{c}{
            \upbracefill
        } &
        \upbracefill & \upbracefill
        \\
        & \multicolumn{7}{c}{\vspace{0.5mm}}\\
        \scriptsize{Original} & \scriptsize{Low-res} &
        \multicolumn{4}{c}{
            \scriptsize{Samples from our algorithm}
        } &
        \scriptsize{Mean} & \scriptsize{std}
    \end{tabular}
    \caption{Super resolution results on LSUN bedroom images (downscaling $2:1$ by plain averaging and adding noise with $\sigma_0 = 0.04$).}
\end{figure}

\begin{figure}
    \centering
    \def\arraystretch{0.1}
    \setlength\tabcolsep{0pt}
    \begin{tabular}{cccccccc}
        \forloop{row}{0}{\value{row} < 6}{
            \hspace{-0.05cm}\includegraphics[width=1.7cm,height=1.7cm]{./images/bedroom_sr4/sample_\arabic{row}_column_0.png} \hspace{-0.17cm}
            \forloop{col}{1}{\value{col} < 8}{
                & \includegraphics[width=1.7cm,height=1.7cm]{./images/bedroom_sr4/sample_\arabic{row}_column_\arabic{col}.png} \hspace{-0.17cm}
            } \\
        }
        & \multicolumn{7}{c}{\vspace{0.5mm}}\\
        \upbracefill & \upbracefill &
        \multicolumn{4}{c}{
            \upbracefill
        } &
        \upbracefill & \upbracefill
        \\
        & \multicolumn{7}{c}{\vspace{0.5mm}}\\
        \scriptsize{Original} & \scriptsize{Low-res} &
        \multicolumn{4}{c}{
            \scriptsize{Samples from our algorithm}
        } &
        \scriptsize{Mean} & \scriptsize{std}
    \end{tabular}
    \caption{Super resolution results on LSUN bedroom images (downscaling $4:1$ by plain averaging and adding noise with $\sigma_0 = 0.04$).}
\end{figure}

\begin{figure}
    \centering
    \def\arraystretch{0.1}
    \setlength\tabcolsep{0pt}
    \begin{tabular}{cccccccc}
        \forloop{row}{0}{\value{row} < 6}{
            \hspace{-0.05cm}\includegraphics[width=1.7cm,height=1.7cm]{./images/bedroom_cs4/sample_\arabic{row}_column_0.png} \hspace{-0.17cm}
            \forloop{col}{1}{\value{col} < 8}{
                & \includegraphics[width=1.7cm,height=1.7cm]{./images/bedroom_cs4/sample_\arabic{row}_column_\arabic{col}.png} \hspace{-0.17cm}
            } \\
        }
        & \multicolumn{7}{c}{\vspace{0.5mm}}\\
        \upbracefill & \upbracefill &
        \multicolumn{4}{c}{
            \upbracefill
        } &
        \upbracefill & \upbracefill
        \\
        & \multicolumn{7}{c}{\vspace{0.5mm}}\\
        \scriptsize{Original} & \scriptsize{Low-res} &
        \multicolumn{4}{c}{
            \scriptsize{Samples from our algorithm}
        } &
        \scriptsize{Mean} & \scriptsize{std}
    \end{tabular}
    \caption{Compressive sensing results on LSUN bedroom images (compression by $25\%$ and adding noise with $\sigma_0 = 0.04$).}
\end{figure}

\begin{figure}
    \centering
    \def\arraystretch{0.1}
    \setlength\tabcolsep{0pt}
    \begin{tabular}{cccccccc}
        \forloop{row}{0}{\value{row} < 6}{
            \hspace{-0.05cm}\includegraphics[width=1.7cm,height=1.7cm]{./images/tower_sr2/sample_\arabic{row}_column_0.png} \hspace{-0.17cm}
            \forloop{col}{1}{\value{col} < 8}{
                & \includegraphics[width=1.7cm,height=1.7cm]{./images/tower_sr2/sample_\arabic{row}_column_\arabic{col}.png} \hspace{-0.17cm}
            } \\
        }
        & \multicolumn{7}{c}{\vspace{0.5mm}}\\
        \upbracefill & \upbracefill &
        \multicolumn{4}{c}{
            \upbracefill
        } &
        \upbracefill & \upbracefill
        \\
        & \multicolumn{7}{c}{\vspace{0.5mm}}\\
        \scriptsize{Original} & \scriptsize{Low-res} &
        \multicolumn{4}{c}{
            \scriptsize{Samples from our algorithm}
        } &
        \scriptsize{Mean} & \scriptsize{std}
    \end{tabular}
    \caption{Super resolution results on LSUN tower images (downscaling $2:1$ by plain averaging and adding noise with $\sigma_0 = 0.04$).}
\end{figure}

\begin{figure}
    \centering
    \def\arraystretch{0.1}
    \setlength\tabcolsep{0pt}
    \begin{tabular}{cccccccc}
        \forloop{row}{0}{\value{row} < 6}{
            \hspace{-0.05cm}\includegraphics[width=1.7cm,height=1.7cm]{./images/tower_sr4/sample_\arabic{row}_column_0.png} \hspace{-0.17cm}
            \forloop{col}{1}{\value{col} < 8}{
                & \includegraphics[width=1.7cm,height=1.7cm]{./images/tower_sr4/sample_\arabic{row}_column_\arabic{col}.png} \hspace{-0.17cm}
            } \\
        }
        & \multicolumn{7}{c}{\vspace{0.5mm}}\\
        \upbracefill & \upbracefill &
        \multicolumn{4}{c}{
            \upbracefill
        } &
        \upbracefill & \upbracefill
        \\
        & \multicolumn{7}{c}{\vspace{0.5mm}}\\
        \scriptsize{Original} & \scriptsize{Low-res} &
        \multicolumn{4}{c}{
            \scriptsize{Samples from our algorithm}
        } &
        \scriptsize{Mean} & \scriptsize{std}
    \end{tabular}
    \caption{Super resolution results on LSUN tower images (downscaling $4:1$ by plain averaging and adding noise with $\sigma_0 = 0.04$).}
\end{figure}

\begin{figure}
    \centering
    \def\arraystretch{0.1}
    \setlength\tabcolsep{0pt}
    \begin{tabular}{cccccccc}
        \forloop{row}{0}{\value{row} < 6}{
            \hspace{-0.05cm}\includegraphics[width=1.7cm,height=1.7cm]{./images/tower_cs4/sample_\arabic{row}_column_0.png} \hspace{-0.17cm}
            \forloop{col}{1}{\value{col} < 8}{
                & \includegraphics[width=1.7cm,height=1.7cm]{./images/tower_cs4/sample_\arabic{row}_column_\arabic{col}.png} \hspace{-0.17cm}
            } \\
        }
        & \multicolumn{7}{c}{\vspace{0.5mm}}\\
        \upbracefill & \upbracefill &
        \multicolumn{4}{c}{
            \upbracefill
        } &
        \upbracefill & \upbracefill
        \\
        & \multicolumn{7}{c}{\vspace{0.5mm}}\\
        \scriptsize{Original} & \scriptsize{Low-res} &
        \multicolumn{4}{c}{
            \scriptsize{Samples from our algorithm}
        } &
        \scriptsize{Mean} & \scriptsize{std}
    \end{tabular}
    \caption{Compressive sensing results on LSUN tower images (compression by $25\%$ and adding noise with $\sigma_0 = 0.04$).}
\end{figure}

\end{document}